
\documentclass[]{interact}

\usepackage{epstopdf}
\usepackage{color}
\usepackage{framed}
\usepackage{booktabs}
\usepackage{enumitem}

\usepackage[numbers,sort&compress]{natbib}
\bibpunct[, ]{[}{]}{,}{n}{,}{,}
\usepackage{amsthm}

\usepackage[labelfont=bf,font=small,labelsep=space]{caption}
\usepackage{subcaption}
\captionsetup{compatibility=false}

\theoremstyle{plain}
\newtheorem{theorem}{Theorem}[section]
\newtheorem{lemma}[theorem]{Lemma}

\theoremstyle{definition}
\newtheorem{definition}[theorem]{Definition}

\theoremstyle{remark}

\renewcommand{\paragraph}[1]{\bigskip\noindent\textbf{#1.}}

\usepackage{multirow}
\usepackage{xspace}
\usepackage{bbm, dsfont}
\newcommand{\systemname}{Celer Network\xspace}

\newcommand{\tokenname}{CELR\xspace}
\newcommand{\orgname}{ScaleSphere Foundation Ltd. (``Foundation'')\xspace}

\newcommand{\routingname}{DBR\xspace}

\usepackage{setspace}
\usepackage[absolute]{textpos}


\begin{document}


\title{\begin{center}\systemname: Bring Internet Scale to Every Blockchain\end{center}}

\author{\begin{center}Mo Dong, Junda Liu, Xiaozhou Li, Qingkai Liang \\\orgname \end{center}}

\maketitle
\begin{spacing}{1.3}
\paragraph{Abstract} Just like how the 56Kbps dialup Internet in the 90s cannot possibly support 4K video streaming, the insufficient scalability of today's blockchain is the key factor limiting its use cases. Current blockchains have low throughput because each operation needs to be processed by the vast majority of nodes to reach on-chain consensus, which is exactly "how to build a super slow distribution system". Ironically, the on-chain consensus scheme also leads to poor privacy as any node can see the full transaction history of one another. While new consensus algorithms keep getting proposed and developed, it is hard to free on-chain consensus from its fundamental limitations.

Off-chain scaling techniques allow mutually distrustful parties to execute a contract locally among themselves instead of on the global blockchain. Parties involved in the transaction  maintain a multi-signature fraud-proof off-chain replicated state machine, and only resort to on-chain consensus when absolutely necessary (e.g., when two parties disagree on a state). Off-chain scaling is the only way to support fully scale-out decentralized applications ("dApps") with better privacy and no compromise on the trust and decentralization guarantees. It is the inflection point for blockchain mass adoption, and will be the engine behind all scalable dApps. 

Celer Network is an Internet-scale, trust-free, and privacy-preserving platform where everyone can quickly build, operate, and use highly scalable dApps. It is not a standalone blockchain but a networked system running on top of existing and future blockchains. It provides unprecedented performance and flexibility through innovation in off-chain scaling techniques and incentive-aligned cryptoeconomics.

Celer Network embraces a layered architecture with clean abstractions that enable rapid evolution of each individual component, including a generalized state channel and sidechain suite that supports fast and generic off-chain state transitions; a provably optimal value transfer routing mechanism that achieves an order of magnitude higher throughput compared to state-of-the-art solutions; a powerful development framework and runtime for off-chain applications; and a new cryptoeconomic model that provides network effect, stable liquidity, and high availability for the off-chain ecosystem.
\end{spacing}

\newpage
\setcounter{tocdepth}{3}
\tableofcontents
\newpage

\section{Introduction}
\label{sec:intro}
Many modern economic activities are essentially the flow and exchange of information and value. Over the past two centuries, the transfer of information has evolved from discrete events through pigeon networks to continuous flows through the speed-of-light Internet. However, the value transfer portion is far from light speed and is still very much discrete events controlled by segregated financial silos. This mismatch creates a devastating bottleneck in economic evolution: no matter how fast information flows, the expensive and slow value transaction is limiting the productive exchange of the two.

Essentially a revolutionary abstraction of trust among distrustful parties that results in an incentive-aligned distributed consensus, blockchain technology offers the foundation to dismantle segregated financial silos and dramatically expand the scope and freedom of global value flows. In practice, however, blockchain is deviating further away from the “speed-of-light” vision due to its low processing power compared to traditional value transfer tools. Scalability is a fundamental challenge that is hindering mass adoption of blockchain technology.

We envision a future with decentralized ecosystems where people, computers, mobile and Internet-of-Things ("IOT") devices can perform secure, private, and trust-free information-value exchange on a massive scale. To achieve this, blockchains should match the scale of the Internet and support hundreds of millions or billions of transactions per second. However, given the processing speed of existing blockchains (i.e., a few or tens of transactions per second), is it really possible to bring the scale of the Internet to blockchains? The answer is yes but only with off-chain scaling.

While on-chain consensus is the foundation of blockchain technology, its limitations are also obvious. In a sense, consensus is the opposite of scalability. For any distributed system, if all nodes need to reach consensus on every single transaction, its performance will be no better (in fact, much worse due to communication overhead) than a centralized system with a single node that processes every transaction, which means the system is eventually bottlenecked by the processing power of the slowest node. On-chain consensus also has severe implications on privacy, because all transactions are permanently public. A few on-chain consensus improvements have been proposed including sharding and various Proof-of-X mechanisms. They make the blockchain relatively faster with different tradeoffs in performance, decentralization, security, and finality, but cannot change the fundamental limitations of on-chain consensus.   

To enable Internet-scale blockchain systems with better privacy and no compromise on trust or decentralization, we have to look beyond on-chain consensus improvements. The core principle to design a scalable distributed system is to make operations on different nodes mostly independent. This simple insight shows that the only way to fully scale out  decentralized applications is to bring most of the transactions off-chain, avoid on-chain consensus as much as possible and use as a last resort. Related techniques include state channel, sidechain, and off-chain computing Oracle. Despite its high potentials, off-chain scaling technology is still in its infancy with many technical and economic challenges remaining unresolved. 

To enable off-chain scaling for prime-time use, we propose Celer Network\footnote{\emph{celer}: swift, fast in Latin, the \emph{c} for the speed of light}, a coherent architecture that brings Internet scale to existing and future blockchains. Celer Network consists of a carefully designed off-chain technology stack that achieves high scalability and flexibility with strong security and privacy guarantees, and a game-theoretical cryptoeconomic model that balances any new tradeoffs.

\subsection{Celer Technology Stack}
As a comprehensive full-stack platform that can be built upon existing or future blockchains, Celer Network encompasses a cleanly layered architecture that decouples sophisticated off-chain platform into hierarchical modules. This architecture greatly simplifies the system design, development, and maintenance, so that each individual component can easily evolve and adapt to changes. 

A well-designed layered architecture should have open interfaces that enable and encourage different implementation on each layer as long as they support the same cross-layer interfaces. Each layer only needs to focus on achieving its own functionality. Inspired by the successful layered design of the Internet, Celer Network adopts an off-chain technology stack that can be built on different blockchains, named \textbf{cStack}, which consists of the following layers in bottom-up order:

\begin{itemize}
\item \textbf{cChannel}: generalized state channel and sidechain suite. 
\item \textbf{cRoute}: provably optimal value transfer routing. 
\item \textbf{cOS}: development framework and runtime for off-chain enabled applications
\end{itemize}

\begin{figure}[t]
\begin{center}
\includegraphics[width=5.5in]{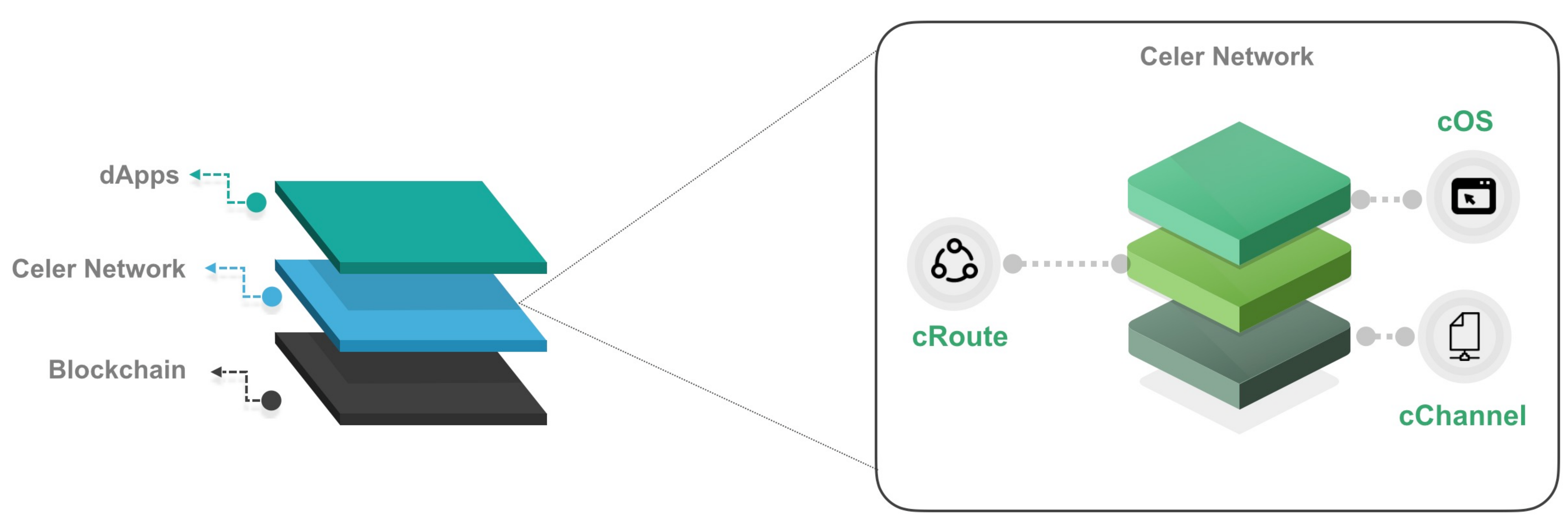}
\caption{Celer Network layered architecture.}
\label{fig:overall_arch}
\end{center}
\end{figure}

Celer architecture provides innovative solutions on all layers. Below we highlight the technical challenges and features of cChannel, cRoute, and cOS.

\paragraph{cChannel} This is the bottom layer of Celer Network that interacts with different underlying blockchains and provides the upper layer with a common abstraction of up-to-date states and bounded-time finality. cChannel uses state channel and sidechain techniques, which are both cornerstones of off-chain scaling platforms.

A state channel allows mutually distrustful parties to execute a program off-chain and quickly settle on the latest agreed states, with their security and finality guaranteed by on-chain bond contracts. It was initially introduced by Lightning Network~\cite{lightning} to support high-throughput off-chain Bitcoin micropayments. Since the introduction of Lightning Network, there have been several research works that addressed different problems in the context of payment channel networks, such as routing~\cite{flare, speedy, landmark}, time lock optimization~\cite{sprite}, and privacy~\cite{pripay}. However, off-chain network is still in its early stage, facing a few major challenges in terms of modularity, flexibility, and cost-efficiency. cChannel meets current challenges by offering a set of new features.
\begin{itemize}
\item \textbf{Generic off-chain state transition.} Off-chain transactions can be arbitrary state transitions with dependency DAG. This allows Celer Network to support complex high-performance off-chain dApps such as gaming, online auction, insurance, prediction market and decentralized exchanges.
\item \textbf{Flexible and efficient value transfer.} Multiple state channel and sidechain constructions with different tradeoffs on efficiency and finality are provided to support fast value transfer with generic condition dependency, minimal on-chain interactions, and minimal fund lockup.
\item \textbf{Pure off-chain contract.} Any contract that is not directly associated with on-chain deposits does not need any on-chain operation or initialization unless a dispute is triggered. Every pure off-chain contract or object has a uniquely identifiable off-chain address, and only needs to be deployed on blockchains when necessary with an on-chain address assigned by the built-in off-chain address translator.
\end{itemize}

\paragraph{cRoute} Celer Network is a platform for highly scalable dApps, designed to support high-throughput value transfer. Off-chain value transfer is an essential requirement of many off-chain applications. While Celer Network is capable of supporting dApps beyond payment solutions, it also makes groundbreaking improvements to off-chain payment routing as it directly determines how much and how fast value can be transferred within the ecosystem. 

All of the existing off-chain payment routing proposals~\cite{landmark, embedding, flare, silent, speedy, revive} boil down to conventional ``shortest path routing'' algorithms, which may achieve poor performance in an off-chain payment network due to the fundamental differences in the link model. The link capacity of a computer network is stateless and stable (i.e., not affected by past transmissions). However, the link capacity of an off-chain payment network is stateful (i.e., determined by on-chain deposits and past payments), which leads to a highly dynamic network where the topology and link states are constantly changing. This makes conventional shortest path routing algorithms hard to converge, and thus yields low throughput, long delay, and even outages.

To counter this fundamental challenge, Celer Network's payment routing module, cRoute, introduces \textbf{Distributed Balanced Routing (\routingname)}, which routes payment traffic using distributed congestion gradients. We highlight a few unique properties of \routingname (details in \S~\ref{sec:bp-description}). 
 
\begin{itemize}
\item \textbf{Provably optimal throughput.} We prove that for any global payment arrival rate, if there exists a routing algorithm that can support the rate, then \routingname can achieve that. Our evaluation shows that \routingname achieves 15x higher throughput and 20x higher channel utilization ratio\footnote{the ratio between the amount of transferred fund in each time slot and the total deposits of all channels.} compared to state-of-art solutions.
\item \textbf{Transparent channel balancing.} ``“Keeping the channels balanced'' has been our goal since the proposal of Lightning Network. However, existing attempts in channel balancing comprise heuristics that require heavy on-chain or off-chain coordination with poor guarantees. \routingname embeds the channel balancing process along with routing and constantly balances the network without requiring any additional coordination.
\item{\textbf{Fully decentralized}. The \routingname algorithm is a fully decentralized algorithm where each node only needs to "talk" to its neighbors in the state channel network topology. \routingname's  protocol also lowers messaging cost.}
\item{\textbf{Failure resilience}. The \routingname algorithm is highly robust against failures: it can quickly detect and adapt to unresponsive nodes, supporting the maximum possible throughput over the remaining available nodes.}
\item{\textbf{Privacy preserving}. The \routingname algorithm can be seamlessly integrated with onion routing~\cite{onion} to preserve anonymity for sources/destinations. Due to its multi-path nature, the \routingname algorithm  naturally preserves the privacy regarding the amount of transferred value, without using any additional privacy-preserving techniques (e.g., ZKSNARK).}
\end{itemize}


\paragraph{cOS} An on-chain dApp is simply a frontend connecting to the blockchain. Off-chain dApps, though with great potentials for high scalability, are not as easy to build and use as the traditional on-chain dApps. Celer Network introduces cOS which is a development framework and runtime for everyone to easily develop, operate, and interact with scalable off-chain dApps without being bogged down by the additional complexities introduced by off-chain scaling. Celer Network allows developers to concentrate on the application logic and create the best user experience, with the cOS dealing with the heavy lifting including the following tasks.

\begin{itemize}
\item Figure out the dependency between arbitrary off-chain and on-chain states. 
\item Handle the tracking, storage, and dispute of off-chain states.
\item Tolerate intermediate node failures transparently. 
\item Support multiple concurrent off-chain dApps.
\item Compile a unified implementation to different on-chain and off-chain modules. 
\end{itemize}

\subsection{Celer Cryptoeconomics}

Celer Network's cryptoeconomic mechanism, cEconomy, is designed based on a fundamental principle: a good cryptoeconomic (token) model should provide additional values and enable new game-theoretical dynamics that are otherwise impossible. While gaining scalability, an off-chain platform is also making tradeoffs on network liquidity and state availability, and it will never take off without a cryptoeconomic model that can enable new dynamics to balance out these tradeoffs.

\paragraph{New tradeoffs} Off-chain platform gains scalability by making the following tradeoffs.
\begin{itemize}
\item \textbf{Scalability vs. Liquidity.} Off-chain value transfer requires deposits to be locked on-chain as network liquidity. This is especially challenging for potential off-chain service providers, because a significant amount of liquidity is needed to provide effective off-chain services for global blockchain users, either as outgoing deposits in state channels or fraud penalty bond in sidechains. However, owners of a large number of crypto assets (whales) may not have the business interest or technical capability to run an off-chain service infrastructure, while people who have the technical capability of running a reliable and scalable off-chain service often do not have enough capital for channel deposits or fraud-proof bonds. Such a mismatch creates a huge hurdle for the mass adoption and technical evolution of off-chain operating networks.

\item \textbf{Scalability vs. Availability.} While off-chain scaling does not make any compromise on the trust-free property of the blockchain, it does sacrifice the availability guarantee. Each state channel or off-chain contract is associated with a dispute timeout, and the involved party will be at risk when staying off-line longer than the timeout, or when the local states are lost. 
\end{itemize}

Therefore, we need an incentive-compatible mechanism to provide sufficient liquidity for  entities which are capable of running a reliable and scalable off-chain service infrastructure, and to guarantee that the off-chain states are always available for possible on-chain dispute.

\paragraph{New cryptoeconomics}
To complete the off-chain scaling solution, we introduce a suite of cryptoeconomic mechanisms, named cEconomy, that brings indispensable value and provides network effect, stable liquidity, and high availability through the Celer Network's protocol token (``\tokenname") and three tightly coupled components.
\begin{itemize}
\item \textbf{Proof of Liquidity Commitment(PoLC).} PoLC is a virtual mining process that acquires abundant and stable liquidity for the off-chain ecosystem. To participate, one simply needs to commits (locks) his idle liquidity (in the form of digital assets, including but not limited to cryptocurrencies and CELR)  to the off-chain platform for a certain period of time with \tokenname rewarded as incentives to such users.  
\item \textbf{Liquidity Backing Auction (LiBA).} LiBA enables off-chain service providers to solicit liquidity through ``crowd lending" with negotiated interest rates. Lenders are ranked according to their ``happiness scores" that are determined by the interest rate, the amount of provisioned liquidity and the amount of staked \tokenname. In particular, lenders who stake more \tokenname (as an indicator for their past contributions to the ecosystems) have higher priority to be selected to provide liquidity to off-chain service providers.
\item \textbf{State Guardian Network (SGN).} SGN is a special compact sidechain that guards the states when users are offline so that the users' states are always available for dispute. Guardians need to stake their \tokenname into SGN to earn guarding opportunities and service fees from the users.
\end{itemize}
Section~\ref{sec:econ} introduces cEconomy mechanisms in detail with analysis of the \tokenname value and model incentive-compatibility.
\section{cChannel: the Foundation of Off-Chain Scaling}
\label{sec:channel}

\systemname's cChannel aims to provide a framework to enable a state channel and sidechain network with maximum flexibility and efficiency. This section starts with the generalized channel construct and outlines the key elements to support arbitrary conditional dependency between on-chain verifiable states. We then expand the horizon beyond classic state channel and examine how to encapsulate sidechain into the same interface exposed to the upper layer.

\subsection{Generalized State Channel}

\subsubsection{Key Idea and a Simple Example}
One major limitation of the existing payment network solutions is the lack of support for generalized state transitions. The need for generalized state transitions comes with the rise of smart contract platforms such as Ethereum. Smart contract enables asynchronous value transfer based on arbitrary contractual logic. To improve the scalability of such blockchains using off-chain state channel concepts, on-chain state transitions should be put into off-chain state channels and the corresponding value transfer should be made aware of such state transitions. 

We use a simple example of \emph{conditional payment} to illustrate the key idea about how to transform on-chain states transition to off-chain states transitions. Let's say Alice and Carl want to play a board game while betting on the result of such game in a trust-free manner: Alice will pay Carl \$1 if Carl wins and vice versa. 

This is a simple logic to implement on-chain. One could build a smart contract that holds Alice's and Carl's payment before the game starts. Alice and Carl will just play that game by calling the on-chain smart contract's functions. When one of them loses the game, surrenders, or times out, the winner gets the loser's deposit. The deposits can be seen as payments that are conditionally issued (i.e. condition on that the counterparty wins). Unfortunately, on-chain smart contract operations are extremely slow and expensive as every transaction involves an on-chain transaction.

Off-chain state channel can be used to  significantly improve scalability while maintaining the same semantic. Let's assume there is a payment channel between Alice and Carl. To enable the above semantic, we need to expand the functionality of channel's state proof to include a conditional lock that depends on the game's winner state. Alice can then send Carl an off-chain conditional payment effectively saying: ``I will pay Carl \$1 if the game contract's $who\_is\_winner$ function says Carl wins". The game state transitions can be also moved off-chain. The most straightforward way is to still have an on-chain contract governing the rule of the board game and that contract's address is referenced in the conditional payment. All the state transitions are happening off-chain via mutually signed game states that can be injected into the on-chain contract when necessary. 

But in fact, since there is no requirement for any kind of value bond for the program states, the entire game contract and the associated states can always stay off-chain as long as involved parties are collaborative. The only requirement is that the relevant game states are on-chain verifiable when need to be. An on-chain verifiable state means other contracts or objects can refer to it with no ambiguity. To realize that, we need to have a reference translator contract that maps off-chain references (such as the hash of contract code, constructor parameters, and nonce) to on-chain references (contract address). With these constructs, the game between Alice and Carl involves only one long-term on-chain contract that is not specific to the game logic, and no on-chain operation or initialization for the gaming. 

The example above reflects a specialized and simple instance of off-chain design patterns and it can be much more sophisticated. The conditional payment can be more complicated than just simple Boolean conditions, and can be designed in a way to redistribute locked up liquidity based on arbitrary contractual logic. In fact, \emph{conditional payments} is simply a special case of a more generalized \emph{conditional state transition}. The channel dependency can also be more complicated than one-to-one dependency to realize the common pattern of multi-hop state relays. We detail out the technical specification in the followings sections.

\subsubsection{Design Goals}
Our top goal is to achieve fast, flexible and trust-free off-chain interactions. We expect in most cases off-chain state transitions will stay off-chain until final resolution. Therefore, we aim to optimize commonly used off-chain patterns into succinct interactions with built-in support from on-chain components. 

Our second goal is to design data structure and object interaction logic that works for different blockchains. Celer Network aims to build a blockchain-agnostic platform and to run on different blockchains that support smart contracts. Therefore, a common data structure schema and a certain layer of indirection are required.

Besides these two highlighted goals, we plan to employ formal specification of channel state machines and verify the security properties along with the communication protocols that alter those states. We should also aim to provide an efficient on-chain resolution mechanism whenever possible.

\subsubsection{General Specification}

In this section, we provide specifications for the core components of cChannel's generalized state channel with a top-down approach, and describes the \textbf{Common State Channel Interface} that applies to any state channel with value transfer and arbitrary contractual logic. There could be extensive specialization and optimization for different concrete use cases, but the principles stay the same.

Before the detailed specification of a generalized state channel, we first introduce several important notations and terms that will be used throughout this section.
\begin{itemize}[leftmargin=*]
\item{(\textbf{State}). Denote by $s$ the state of a channel. For a bi-party payment channel, $s$ represents the  available balances of the two parties; for a board game, $s$ represents the board state.}
\item{(\textbf{State Proof}).  State proof serves as a bridge data structure between on-chain contracts and off-chain communication protocols. A state proof $sp$ contains the following fields
\begin{equation} \label{eq:state_proof}
    sp= \big\{\Delta s, seq, merkle\_root, sigs \big\},
\end{equation}
where $\Delta s$ denotes the accumulative state updates up until now. Note that given a base state $s_0$ and a state update $\Delta s$, we can uniquely produce a new channel state $s$. For example, in a bi-party payment channel, the base state $s_0$ corresponds to deposits of the two parties and the state update $\Delta s$ is a mapping that indicates the amount of tokens transferred from one participant to the other participant.
$seq$ is the sequence number for the state proof. A state proof with a higher sequence number will disable state proofs with lower sequence numbers. $merkle\_root$ is the root of the merkle tree of all pending condition groups and is crucial for creating conditional dependency between states in cChannel.  Finally, $sigs$ represents the signature from all parties on this state proof. State proof is valid only if all parties signatures are present. }
\item{(\textbf{Condition}). Condition $cond$ is the data structure representing the basic unit of conditional dependency and this is where the conditional dependency DAGs are weaved. A condition can be specified as follows.
\begin{equation} \label{eq:condition}
cond = \big\{timeout, *\textsc{IsFinalized}(args), *\textsc{QueryResult}(args) \big\}
\end{equation}
Here, $timeout$ is the timeout after which the condition expires. For example, for a condition that depends on the results of a board game, $timeout$ may correspond to the maximum duration of the board game (e.g., tens of minutes). Boolean function pointer $\textsc{IsFinalized}(args)$ is used to check whether the condition has been resolved and settled before the condition timeout. The arguments for this function call are application-specific. For example, in the board game, the arguments could be as simple as $args=[blocknumber]$ querying  whether the game winner has been determined before $blocknumber$. In addition, $\textsc{QueryResult}(args)$ is a result query function pointer that returns arbitrary bytes as the condition's resolving result. For example, in the board game, the arguments could be $args=[player1]$ querying whether $player1$ is the winner (boolean condition); in the second-price auction, the arguments could be $args=[participant1, participant2, \cdots, participantN]$  querying who is the winner and the amount of money each participant should pay (generic condition). The resolution process for a condition is to first perform $\textsc{IsFinalized}(args)$ and then perform result query $\textsc{QueryResult}(args)$.}
\item{(\textbf{Condition Group}). Condition group $cond\_group$ is a higher-level abstraction for a group of conditions to express generalized state dependencies. A condition group can be specified as follows.
\begin{equation} \label{eq:condition_group}
cond\_group = \big\{\Lambda, \textsc{ResolveGroup}(cond\_results) \big\},
\end{equation}
where $\Lambda$ denotes a set of conditions contained in this condition group. Each condition $cond \in \Lambda$ resolves to an arbitrary bytes array (i.e., the output of $cond.\textsc{QueryResult}(args)$). These bytes array are handled by a group resolving function $\textsc{ResolveGroup}(cond\_results)$ which takes the resolving results of all conditions as inputs and returns a state update $\Delta s$. For a payment channel, each condition group corresponds to a conditional payment. For example, a conditional payment saying that ``A pays B \$1 if B wins the Gomoku game'' corresponds to a condition group that contains two the conditions: the Hashed Time Lock condition (for multi-hop relay) and the Gomoku game condition (``B wins the game"). The $\textsc{ResolveGroup}$ function simply returns a transfer of \$1 from A to B if both conditions are true.
}
\end{itemize}

Now we are ready to specify the interface for a state channel. A state channel $\mathcal{C}$ can be specified as the following tuple:
\begin{equation} \label{eq:generalized_channel}
\mathcal{C} = \big\{p, s_0, sp, s, \mathcal{F}, \tau \big\},
\end{equation}
$p = \{p_1, p_2, ..., p_n\}$ is the set of participants in this channel. $s_0$ is the on-chain base state for this channel (e.g., initial deposits for each participant in a payment channel).  $sp$ represents the most updated known state proof for the channel.  $s$ is the updated channel state after state proof $sp$ is fully settled. $\tau$ is the settlement timeout increment for the state proof that will be specified later. $\mathcal{F}$ contains a set of standard functions that should be implemented by every state channel:
\begin{itemize}[leftmargin=*]
\item{$\textsc{ResolveStateProof}(sp,~cond\_groups)$. This function updates the current state proof by resolving attached condition groups.}
\item{$\textsc{GetUpdatedState}(sp,~s_0)$. This function is used to get the most updated state based on off-chain state proof $sp$ and on-chain base states $s_0$.}
\item{$\textsc{UpdateState}(s)$. This function allows on-chain updates of state channel's currently resolved state $s$.}
\item{$\textsc{IntendSettle}(new\_sp)$. This function opens a challenge period before the settlement timeout. During the challenge period, this function takes a state proof as input and update the current state proof if the input is newer.}
\item{$\textsc{ConfirmSettle}(sp)$. This function validates and confirms the current state proof as fully settled given the current time  has exceeded the settlement timeout.}
\item{$\textsc{IsFinalized}(args)$ and $ \textsc{QueryResult}(args)$ are the entry points for resolving  conditional dependency.  It accepts outside queries with necessary arguments for the querying contract to interpret accordingly. In fact, some patterns are used frequently enough, in cChannel's implementation, we separate them into pre-defined function interfaces.}
\item{$\textsc{CloseStateChannel}(s)$. This function terminates the life cycle of the state channel and distributes necessary states out according the latest settled state $s$.}
\end{itemize}
Settlement timeout is determined based on time of last called $\textsc{ResolveStateProof}$ or $\textsc{SettleStateProof}$ and the settlement timeout increment is $\tau$.

\paragraph{Dependency Constraints} When we create dependencies among different state channels, some constraints need to be enforced in order to guarantee a proper resolution of the dependency DAG. Suppose state channel $\mathcal{C}_1$ depends on state channel $\mathcal{C}_2$. Then it is required that the participants of $\mathcal{C}_1$ should be a subset of the participants of $\mathcal{C}_2$ such that the participants of $\mathcal{C}_1$ have the necessary information resolving its dependency on $\mathcal{C}_1$.

\subsubsection{Common Utilities}
The above abstraction defines the common pattern for generalized state channel construction. In different blockchains, the actual implementation might be different. For example, in Ethereum, cross-contract calls contains return value but in Dfinity, cross-contract calls only trigger registered callbacks. Reviewing multiple blockchains' implementations on state transition VMs, we identify two common utilities that are essential for the operation of generalized state channel in practice as followings:

\begin{itemize} [leftmargin=*]
\item \textbf{Off-chain Address Translator (OAT).} In the above abstraction, condition and condition group are associated with different functions. These functions should be the reference of on-chain contract's functions, but since program (smart contract) states are not inherently bound to constraints on the blockchain, there should be no fundamental requirement to have an on-chain presence. The only barrier to moving them entirely off-chain is the possible ambiguity of reference for functions such as $\textsc{IsFinalized}$ and $ \textsc{QueryResult}$. 

To resolve this ambiguity, we can define an on-chain rule set to map off-chain references to on-chain references. Off-chain Address Translator is built for that. For a contract with no value involved, it can be referenced by a unique identifier generated through its contract code, initial states, and a certain nonce. We call such unique identifier off-chain address. When resolving conditions on-chain, the referenced contracts need to be deployed and the corresponding functions (e.g., $\textsc{IsFinalized}$ and $ \textsc{QueryResult}$) should be able to translate off-chain addresses to on-chain addresses. To realize such functionality, OAT needs to be able to deploy contract code and initial states on-chain to get the on-chain contract and establish the mapping from off-chain address to on-chain address. 

\item \textbf{Hash Time Lock Registry(HTLR).} Hash Time Lock is commonly used in the scenario where transactions involving multiple state channels need to happen atomically. For example, multi-hop relayed payment (unconditional or conditional), atomic swap between different tokens, cross-chain bridges and more. HTL can be implemented entirely off-chain, but as Sprite~\cite{sprite} has pointed out, this is an over-optimization that actually limits the off-chain scalability. Therefore, Sprite~\cite{sprite} proposes a central registry where all locks can refer to. We extend and modify Sprite to fit in cChannel's common model. Effectively, HTLR provides dependency endpoints ($\textsc{IsFinalized}$, $\textsc{QueryResult}$) for conditions that act as locks. $\textsc{IsFinalized}$ takes a hash and block number and returns true if the corresponding pre-image has been registered before the block number. $\textsc{QueryResult}$ takes a hash and returns true if the pre-image of the hash is registered. These two functions can be simplified further into one, but for the sake of generality, we can simply keep them as two separate functions. Note that HTLR, and associated $\textsc{IsFinalized}$ and $\textsc{QueryResult}$, is always on-chain.
\end{itemize}

\subsubsection{Out-of-box Features}

In addition, we need to look into patterns that would be commonly used and enhance certain on-chain components with out-of-box features to simplify the corresponding off-chain interactions. \textbf{Generalized Payment Channel (GPC)} is a very good example of that. Generalized Payment Channel is payment channel that conforms to the general state channel specification and therefore can support various conditional payments based on further on-chain or off-chain objects. 

We first make the abstract model more concrete in the context of GPC. $s_0$ represents the static deposit map for each party in $p$. $s$ represents the final netted value each party owns. $\textsc{SubmitStateProof}$ is the function to submit a state proof and trigger a timeout challenge period before $\textsc{SettleStateProof}$ can be called and confirm the state proof. $\textsc{IsFinalized}$ and $\textsc{QueryResult}$ are functions to check whether the state of this payment channel has been finalized and query the current balances. One may wonder why a payment channel has an interface for outside query. This is because some other payments or states may depend on $s$ or existence of certain conditional payment locked in $sp$. $\textsc{ResolveStateProof}$ is the most interesting part as this is where a lot of specialized optimization will happen and greatly reduce the off-chain interaction complexity. $\textsc{GetUpdatedState}$ is a straightforward function to compute the netted out payment for each party based on the initial deposit and the fully resolved $sp$. $\textsc{CloseStateChannel}$ simply closes the channel and distributes the netted out the final balance for each party. 

With this basic model, we discuss how we can further optimize the GPC constructs to enable useful out-of-box features.

\begin{itemize}[leftmargin=*]
\item \textbf{Cooperative Settling} In most cases, counterparties of state channel applications are cooperative. As a result, it is added complexity and expense to go through the challenge period and then settle. Therefore, cChannel enables cooperative settling where counterparties not only sign the most recent state proof, but also sign the resulting signature to show the agreement that the state update described in that state proof is indeed the final state. With this, the number of transactions to settle a state proof can reduce from 2 to 1. 
\item \textbf{Single-transaction Channel Opening} Another  optimization cChannel brings is to reduce the number of on-chain operation to open a channel from 3 to 1. This is achieved by using a dependency contract to store deposits for counterparties. The $N-1$ counterparties will just sign the authorization of withdrawal entirely off-chain and one counterparty can submit that on-chain to complete the channel opening process. 
\item \textbf{Direct Final State Claim} When building generalized state channel applications, conditional state dependency is commonly used. When finalizing the GPC, one party may want to avoid the process of traversing the conditional dependency graph. This is to limit the griefing scenario of the counter-party, where the counterparty goes off-line and refuse to cooperatively convert some ConditionGroup to unconditional state update. To limit the work needed for the disputing party, we introduce the method of direct final state claim. It allows the online party to directly claim a final state without actually performing any additional traversal of dependency graph. No counterparty signatures are needed. To avoid abuse, a fraud-proof bond is also required for the claiming party. After a challenging period, the state will become final without needing to perform any additional operations. 
\item \textbf{Dynamic Deposit and Withdrawal.} A common requirement for GPC is to enable seamless on-chain transactions when the counterparty is not connected to the network. For withdrawals of funds, we introduces a pair of functions $\textsc{IntendWithdraw}$ and $\textsc{ConfirmWithdraw}$, to meet this requirement. $\textsc{IntendToWithdraw}$ changes the base state  $s_0$ with a challenge period. Counter-party can submit conflicting $sp$ to dispute. If no dispute happens before the challenge period defined by $\tau$, $\textsc{ConfirmToWithdraw}$ is called to confirm and distribute the withdrawal. These two functions work very much like $\textsc{IntendSettle}$ and $\textsc{ConfirmSettle}$. Deposit is straightforward as it only changes the base state $s_0$. 

\item \textbf{Boolean Circuit Condition Group.} We expect that GPC's most common use case would be Boolean circuit based conditional payment. For example, ``A pays B if function X or function Y return true''. To optimize for such payment, we tweak the interface of condition group and condition. In particular, we can specialize function $\textsc{ResolveGroup}$ to release a pre-defined conditional payment if any of the condition resolving results (or any boolean circuit of condition results) holds true. This way, we saved the trouble of creating additional objects for $\textsc{ResolveGroup}$ and the corresponding multi-party communication overhead. We also specify condition as Boolean condition so that we require the depending objects should have an interface with the effect of ``isSatisfied'' that returns true or false based on the state queried. 
\item \textbf{Fund Assignment Condition Group.} Another more generalized use case for GPC is generalized state assignment. We implement this by introducing another different type of condition group, which only has one condition in it. $\textsc{QueryResult}$ will directly return a state assignment map dictating an update for $\Delta s$. This enables a more general plug-in point for GPC. One can plugin an off-chain contract that was initialized with certain locked in liquidity. This contract can check not only who wins a game (Boolean) but how many steps the winner took to win the game, and then assign the liquidity by carrying out a certain computation. The involved parties can generate a Condition Group that references the check function of the off-chain contract address they mutually agree on. 
\end{itemize}

There can be many more common patterns defined for different patterns, but the above example illustrates the design principle for such optimization.

\subsection{Alternative Channel Model with Sidechains}
Besides the above-mentioned generalized state channel model, cChannel also introduces an alternative state channel model that is facilitated by sidechains~\cite{plasma}. 
For example, consider the scenario where multiple users need to pay each other. Users can pool their deposits to a central contract, which acts like a sidechain contract with the off-chain service providers playing the role of block proposers (forming a ``multi-party hub'' with off-chain service providers being ``hub operators''), and therefore enables one-to-many payment relationships within a hub. The integrity of the off-chain service providers is ensured by a certain level of fraud-proof bond acceptable by participants. 

Specifically, in \systemname, each off-chain service provider can run a sidechain-aided state channel:
\begin{equation} \label{eq:sidechain}
    \mathcal{C} = \{s, p^*, b, \tau \},
\end{equation}
where $s$ is the sidechain state,  $p^*$ is the single block proposer (the off-chain service provider), $b$ is the fraud bond and $\tau$ represents the finality timeout. Each node $i$ can send sidechain transactions to every other node in the channel to update the latest state. Just like any sidechain transactions, node $i$ will not only sign this transaction but also sign another transaction to prove that it has seen this transaction included in the block created by $p^*$. The second signed transaction can be used as proof for node $i$. As long as participants have block data fully available, the finality of this sidechain transaction can be confirmed quickly as well. 

This sidechain-aided channel model comes with the following expected benefits~\cite{plasma} when compared to previously mentioned channel models.
\begin{itemize}
\item \textbf{No on-chain transaction and online presence needed for the receiver.} This is a natural benefit inherited from sidechain properties. The reason is that the receiver can redeem their fund received on the sidechain-aided channel without actually performing any sidechain deposit themselves.

\item \textbf{No per-party fund lock-up.} This benefit is in the context of payment channels. When side-chain based channels are used for multi-party payment, each party does not need to lock their deposit in advance before they pay each other (except for the block proposer who needs to deposit fraud-proof bond).
\end{itemize}

However, the ecosystem should be clearly aware of the downsides of this channel model as the following:
\begin{itemize}
\item \textbf{Fraud proof bond is still needed.} In the case of sidechain based channels, the fraud-proof bond is still needed from either the block proposer $p^*$ directly or whoever is providing auditing and insurance services. It should be clearly understood that the worst-case liquidity requirement for the block proposer (i.e., the off-chain service provider) is actually unbounded. The reason is that with enough colluding, malicious party can create unbounded repeated spending.
\item \textbf{Data availability issue may complicate finality.} Even if no malicious parties are involved, the inherent finality delay for sidechain model still haunts this channel model, especially when data availability becomes an issue. When the block data is not always available among relevant parties, the sidechain faces inevitable re-organization and therefore finality will be delayed at the best and the entire sidechain will be abandoned in the worst case.
\end{itemize}

These sidechain based channels can be further connected to each other via the common state channels. 
\section{cRoute: Provably-Optimal Value Transfer Routing}
\label{sec:routing}
\subsection{Challenges in State Channel Network Routing}
The need for state routing (or ``payment routing" in the case of payment channel networks) is apparent: it is impractical to establish direct state channels between every pair of nodes due to channel opening costs and deposit liquidity lockup. Therefore, it is necessary to build a network consisting of state channels, where state transitions should be relayed in a trust-free manner. The design of state routing is crucial for the level of scalability that a state channel network can provide, i.e., how fast and how many transactions can flow on a given network. However, existing proposals all fall short to meet the fundamental challenges imposed by the unique properties of state channel networks.

Landmark routing \cite{landmark} has been proposed as one option for decentralized payment routing in several payment channel networks. For example,  Lightning Network \cite{lightning} adopts a landmark routing protocol called Flare \cite{flare}. A similar algorithm is also used in the decentralized IOU credit network SilentWhispers \cite{silent}. The key idea of landmark routing is to determine the shortest path from sender to receiver through an intermediate node, called a landmark, usually a well-known node with high connectivity.

Raiden Network \cite{raiden} (a payment channel network) mentioned a few implementation alternatives for payment routing, such as $A^*$ tree search which is a distributed implementation of shortest path routing. In addition, since route discovery is hard but crucial, nodes can provide pathfinding services for other nodes for some convenience fees in Raiden Network.

Recently proposed SpeedyMurmurs \cite{speedy} enhances the previous shortest path routing algorithms (as used in Lightning Network and Raiden Network) by accounting for the available balances in each payment channel.  Specifically, SpeedyMurmurs is based on the embedding-based routing algorithms that are commonly used in P2P networks \cite{embedding}, which first constructs a prefix tree and then assigns  a coordinate for each node. The forwarding of each payment is based on the distances between the coordinate known to that node and the coordinate of the destination. The prefix tree and the coordinate of each node will be adjusted if there is any link that needs to be removed (i.e., when a link runs out of balance) or added (i.e., when a depleted link receives new funding).

\begin{figure}[t]
\begin{center}
\includegraphics[width=5.5in]{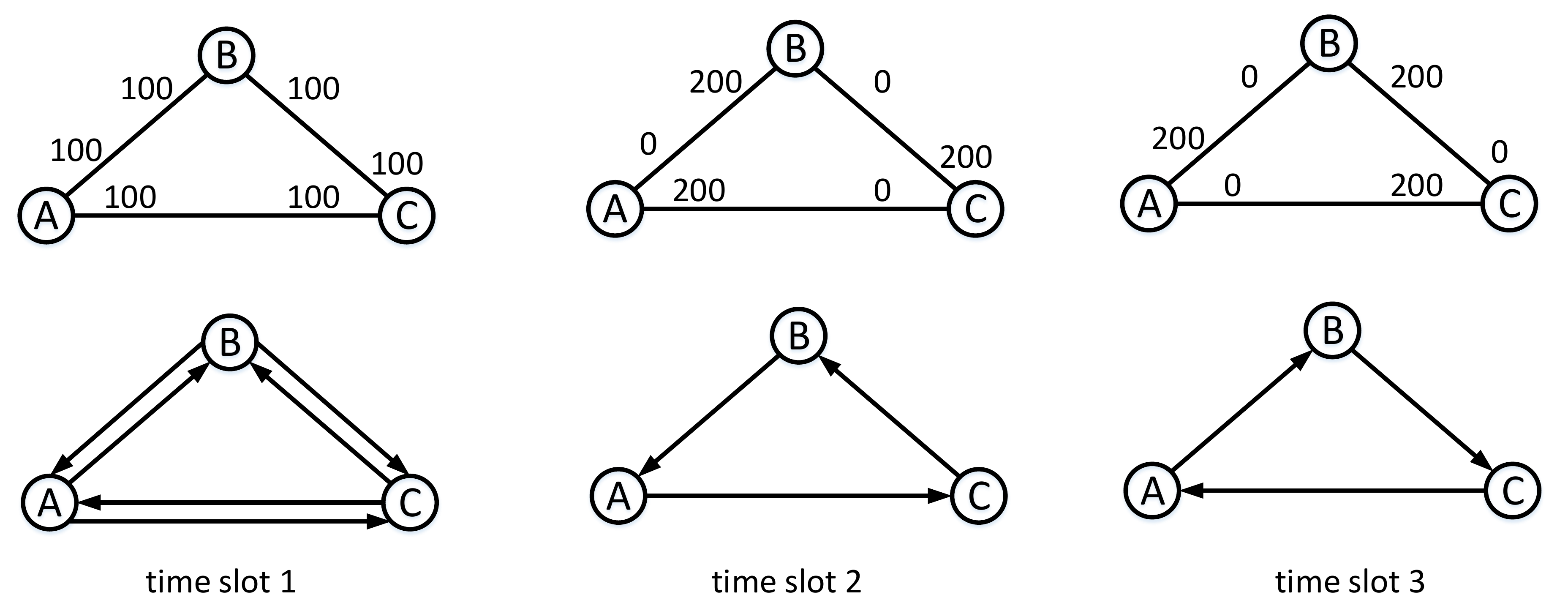}
\caption{Shortest path routing leads to frequent topology changes due to channel imbalance.}
\label{fig:sp-example}
\end{center}
\end{figure}

It can be observed that all of the existing routing mechanisms boil down to ``\emph{shortest path routing with available balance consideration}''. In traditional data networks, shortest path routing does provide reasonably good throughput and delay performance, based on the assumption that network topology remains relatively stable and link capacity is ``stateless" (i.e., the capacity of each link is not affected by past transmissions). Unfortunately, such an assumption no longer holds for an off-chain state channel network due to its ``stateful" link model, i.e., the capacity (available balance) of each directed link keeps changing as payments go through that link. Note that shortest path routing does not account for channel balancing, and thus each link may quickly run out of its capacity, which further leads to frequent changes in network topology. Figure \ref{fig:sp-example} illustrates a scenario in which shortest path routing leads to topology changes every time slot. Suppose that at the beginning of each time slot, node $A$, node $B$ and node $C$ each initiates a payment of 100 tokens to node $B$, node $C$ and node $A$, respectively. Under the initial channel balance distribution (time slot 1), every pair of nodes are connected by a bi-directional link, and  each node selects a direct route to its destination under shortest path routing. However, this results in a uni-directional transfer over each channel, and thus the distribution of channel balances becomes highly skewed in time slot 2, where the underlying topology is a counter-clockwise cycle. In this new topology, shortest path routing continues to make uni-directional transfers (e.g., selects route $A\rightarrow C\rightarrow B$ for payment $A\rightarrow B$), and channel balances are pushed to the other extreme, where the underlying topology is completely reversed to a clockwise cycle (time slot 3). The same pattern will repeat indefinitely. In contrast, if node $C$ takes a longer route $C\rightarrow B\rightarrow A$, every channel will remain balanced all the time, and the network topology never changes. For any decentralized implementation of shortest path routing, such frequent topology changes could lead to poor performance since it takes time for the algorithm to converge on the new topology (e.g., to reconstruct the prefix tree as in SpeedyMurmurs \cite{speedy}), during which sub-optimal routes may be taken. What's even worse is that the network topology may change again before the algorithm converges, and thus the algorithm may \emph{never converge} and achieve continually poor throughput performance.

Note that the recent project Revive \cite{revive} proposes an explicit channel rebalancing scheme. However, Revive does not account for state routing, which means that its channel rebalancing procedure is not transparent to the underlying routing process and requires extra out-of-band coordination. Moreover, Revive only works in a restricted class of network topologies that contain cyclic structures, and it does not provide any guarantee that its channel rebalancing procedure is feasible in a general topology. In comparison, we propose a routing algorithm that achieves transparent and optimal channel balancing during the routing process.



\subsection{Distributed Balanced Routing (DBR)}
We propose \emph{Distributed Balanced Routing} (DBR) as an efficient routing protocol for value transfers in an off-chain state channel network. The \routingname algorithm is inspired by the BackPressure routing algorithm \cite{BP1, BP2} that was originally used in wireless networks. It is based on a completely different design philosophy from the traditional shortest path routing.  In particular, \routingname does not perform any explicit path computation from source to destination. Instead, the routing direction is guided by the current network's \emph{congestion gradients}. Think of water flowing from the top of a hill to a destination at the foot of the hill. The water does not need to know the route to its destination; all it needs to do is to follow the direction of gravity.

The \routingname algorithm uses a similar design philosophy but also accounts for the stateful link model in state channel networks. In particular, the \routingname algorithm is augmented with a state channel balancing ability that transparently maintains balanced transfer flows for each state channel. Compared with existing routing algorithms, the proposed \routingname  algorithm has the following advantages:
\begin{itemize}
\item{\textbf{Provably optimal throughput}. In other words, for a given arrival rate of value transfer requests, if there exists any routing algorithm that ``supports" the rate, \routingname is also able to do that. The meaning of ``support" will be specified in Section \ref{sec:bp-performance}.}
\item{\textbf{Transparent channel balancing}.    In \routingname, the channel rebalancing process is naturally embedded in the routing process without any additional coordination. It automatically rebalances each state channel to maintain balanced value transfers in the long term.}
\item{\textbf{Fully decentralized}. The \routingname algorithm is a fully decentralized algorithm where each node only needs to talk to its neighbors in the state channel network topology. \routingname  also has low messaging cost in the protocol.}
\item{\textbf{Failure resilience}. The \routingname algorithm is highly robust against failures: it can quickly detect and adapt to unresponsive nodes, supporting the maximum possible throughput over the remaining available nodes.}
\item{\textbf{Privacy preserving}. Due to its multi-path nature, the \routingname algorithm  naturally preserves the privacy regarding the amount of transferred values, without using any additional privacy-preserving techniques (e.g., ZKSNARK). More importantly, the \routingname algorithm can be seamlessly integrated with onion routing~\cite{onion} to preserve anonymity for sources/destinations.}
\end{itemize}

In the following, we first introduce the state channel network model, then describe the \routingname algorithm, and finally prove the performance of \routingname. Note that for the ease of exposition, we restrict our attention to bi-party payment channels in this section, but the same ideas apply to any state channel network that has value transfer requirements.

\subsubsection{System Model} \label{sec:route:sysmodel}
In our model, time is discretized into slots of a fixed length, where the length of each slot usually corresponds to the physical transmission delay over one hop. Suppose that there are $N$ nodes in the network. For each pair of nodes $i$ and $j$, a pair of directed links $i\rightarrow j$ and $j \rightarrow i$ exit if there is a bi-directional payment channel $i\leftrightarrow j$ between node $i$ and node $j$.  Let $c_{ij}(t)$ be the capacity of link $i\rightarrow j$ in slot $t$, which corresponds to the remaining balance in the payment channel that can be transferred from node $i$ to node $j$ at the beginning of that slot. There is a total deposit constraint for each bi-directional payment channel between node $i$ and node $j$:
\[
c_{ij}(t) + c_{ji}(t) = B_{i\leftrightarrow j}(t),
\]
where $B_{i\leftrightarrow j}(t)$ is the total deposit of bi-directional payment channel $i\leftrightarrow j$ at the beginning of slot $t$. Note that the total deposit $B_{i\leftrightarrow j}(t)$ may change over time due to dynamic on-chain fund deposit/withdrawal.

During each slot $t$, each node $i$ receives new payment requests from outside the network, where the total amount of tokens that should be delivered to node $k$ is $a_i^{(k)}(t)\ge 0$. Also denote by $\mu_{ij}^{(k)}(t)$ the amount of tokens (required to be delivered to node $k$) sent over link $i\rightarrow j$ in slot $t$, which is referred to as a \emph{routing variable}.

\subsubsection{Protocol Description}\label{sec:bp-description}
Before the description of \routingname, we first introduce several important notions: debt queue, channel imbalance and congestion-plus-imbalance (CPI) weight.

\vspace{2mm}

\noindent \textbf{(Debt Queue).} In the operation of \routingname, each node $i$ needs to maintain a ``debt queue"  for payments destined to each node $k$, whose queue length $Q_i^{(k)}(t)$ corresponds to the amount of tokens (with destination $k$) that should be relayed by node $i$ to the next hop but have not been relayed yet at the beginning of slot $t$. Intuitively, the length of the debt queue is an indicator of congestion over each link. The queue length evolution is as follows:
\begin{equation}\label{eq:queue}
Q_i^{(k)}(t+1)=\Big[Q_i^{(k)}(t)+a_i^{(k)}(t)+\sum_{j\in \mathcal{N}_i} \mu_{ji}^{(k)}(t)-\sum_{j\in\mathcal{N}_i}\mu_{ij}^{(k)}(t)\Big]^+,
\end{equation}
where $[x]^+=\max\{0, x\}$ (since queue length cannot be negative) and $\mathcal{N}_i$ is the set of neighbor nodes of node $i$. The above equation simply means that in slot $t$, the change of queue length is caused by three components: (1) new token transfer requests  from outside the network (i.e., $a_i^{(k)}(t)$), (2) tokens routed from neighbors to node $i$, i.e., $\sum_{j\in \mathcal{N}_i} \mu_{ji}^{(k)}(t)$, and (3) tokens routed from node $i$ to its neighbors, i.e., $\sum_{j\in\mathcal{N}_i}\mu_{ij}^{(k)}(t)$. It should be noted that the queue length at the destination node is always zero, i.e., $Q_i^{(i)}(t)=0$ for each node $i$, which guarantees that every packet can be eventually delivered to its destination under the \routingname algorithm.

\vspace{2mm}

\noindent \textbf{(Channel Imbalance).} For each link $i\rightarrow j$, we define the channel imbalance as  
\begin{equation}
\Delta_{ij}(t)=\sum_{\tau<t}\sum_k \Big(\mu_{ji}^{(k)}(\tau)-\mu_{ij}^{(k)}(\tau)\Big).
\end{equation}
Intuitively, $\Delta_{ij}(t)$ is the difference between the total amount of tokens received by node $i$ from node $j$ and the total amount of tokens sent from $i$ to $j$ over their payment channel up to the beginning of slot $t$. Note that if $\Delta_{ij}(t)<0$ then it means that node $i$ sent more tokens to node $j$ than what was received from node $j$. Clearly, $\Delta_{ij}(t)$ is a natural measure of channel imbalance as perceived by node $i$. Our \routingname algorithm tries to balance the payment channel such that $\lim_{t\rightarrow \infty} \Delta_{ij}(t)\slash t=0$ for each payment channel $i\leftrightarrow j$, which implies that the long-term sending rate from $i$ to $j$ equals the sending rate from $j$ to $i$.

\vspace{2mm}

\noindent \textbf{(Congestion-Plus-Imbalance (CPI) Weight).} Define the Congestion-Plus-Imbalance (CPI) weight for link $i\rightarrow j$ and destination $k$ as
\begin{equation}
W^{(k)}_{ij}(t) = Q_{i}^{(k)}(t) - Q_{j}^{(k)}(t) + \beta\Delta_{ij}(t),
\end{equation}
where $\beta > 0$ is a parameter adjusting the importance of channel balancing. 
Intuitively, the above weight is the sum of the differential backlog $Q_{i}^{(k)}(t) - Q_{j}^{(k)}(t)$ for payments destined to node $k$ between node $i$ and node $j$ (i.e., congestion gradient) and the channel imbalance $\Delta_{ij}(t)$ between node $i$ and node $j$. The former is used to reduce network congestion and improve network throughput while the latter is used to balance payment channels.

\begin{framed}
\noindent  \textbf{Distributed Balanced Routing (\routingname)}

\vspace{3mm}

\noindent  The following protocol is locally executed by each node $i$.

\vspace{3mm}

In each time slot $t$, node $i$ first exchanges the queue length information with its neighbors and calculates the CPI weights. Then for each link $i\rightarrow j$, node $i$ calculates the best payment flow to transmit over that link:
\begin{equation}
k^*=\arg\max_k~~W^{(k)}_{ij}(t).
\end{equation}
If $W^{(k^*)}_{ij}(t) > 0$, then $\mu_{ij}^{(k^*)}(t)=c_{ij}(t)$ otherwise  $\mu_{ij}^{(k^*)}(t)=0$. For any $k\ne k^*$, set $\mu_{ij}^{(k)}(t)=0$. 

\end{framed}

\vspace{2mm}

\noindent \textbf{Remark.} In each slot $t$, \routingname  essentially tries to solve the following weighted-sum optimization problem:
\begin{equation}\label{eq:maxweight}
\begin{split}
\max ~~ &\sum_{ij}\sum_{k}\mu_{ij}^{(k)}(t)W_{ij}^{(k)}(t) \\
 \text{s.t.}~~ & \sum_{k} \mu_{ij}^{(k)}(t) +  \sum_{k} \mu_{ji}^{(k)}(t)\le B_{i\leftrightarrow j}(t),~\forall i,j.
 \end{split}
\end{equation}
The above optimization problem is also called MaxWeight and derived from our theoretical analysis of \routingname (see the next section). The aforementioned algorithm description gives an approximate solution to MaxWeight.

\subsubsection{Throughput Performance of \routingname }\label{sec:bp-performance}

To analyze the throughput performance of \routingname , we first introduce a few definitions. 

\begin{itemize}
\item{A state channel network is said to be \textbf{stable} if 
\[
\lim_{t\rightarrow \infty} \frac{Q_i^{(k)}(t)}{t} = 0,~\forall i,k,
\]
which implies that the long-term arrival rate to each debt queue equals the long-term departure rate from that queue.}
\item{A state channel network is said to \textbf{balanced} if 
\[
\lim_{t\rightarrow \infty} \frac{\Delta_{ij}(t)}{t}=0,~\forall \text{ channel }i\leftrightarrow j.
\]
In other words, for each payment channel $i\leftrightarrow j$ the long-term sending rate from node $i$ to node $j$ equals the sending rate from node $j$ to node $i$.}
\item{Define 
\[
\lambda_i^{(k)}\triangleq\lim_{t\rightarrow\infty}\frac{1}{t}\sum_{\tau=0}^{t-1}a_i^{(k)}(\tau)
\]
as the long-term average arrival rate to node $i$ for payments with destination $k$. An arrival rate vector $\bm{\lambda}=(\lambda_i^{(k)})_{i,k}$ is said to be \textbf{supportable} if there exists a routing algorithm that can keep the network stable and balanced under this arrival rate vector.}
\item{The \textbf{throughput region} of a state channel network is the set of supportable arrival rate vectors.}
\item{A routing algorithm is \textbf{throughput-optimal} if it can support any payment arrival rate vector inside the throughput region.}
\end{itemize}

For the ease of exposition, we assume that the external payment arrival process $\{a_i^{(k)}(t)\}_{t\ge 0}$ is stationary and has a steady-state distribution, and that the total deposit for each payment channel remains fixed, i.e., $B_{i\leftrightarrow j}(t)=B_{i\leftrightarrow j}$ for any $t\ge 0$. Our analysis can be extended to the case where the arrival process is non-stationary and channel deposits are time-varying (e.g., dynamic on-chain deposit/withdraw), at the expense of unwieldy notations. The following theorem shows the throughput performance of \routingname.
\begin{theorem}\label{thm:bp}
The \routingname algorithm is throughput-optimal.
\end{theorem}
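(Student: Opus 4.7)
The plan is to use a Lyapunov drift argument in the style of the classical BackPressure analysis, adapted to handle both queue dynamics and the channel-imbalance term that is unique to the state channel setting. Concretely, I would define the quadratic Lyapunov function
\begin{equation*}
L(t) = \tfrac{1}{2}\sum_{i,k}\bigl(Q_i^{(k)}(t)\bigr)^2 \;+\; \tfrac{\beta}{2}\sum_{i\leftrightarrow j}\bigl(\Delta_{ij}(t)\bigr)^2,
\end{equation*}
which couples congestion and imbalance through the same parameter $\beta$ that appears in the CPI weight. The first step is then to compute (or upper bound) the one-slot conditional drift $\Delta(t) \triangleq \mathbb{E}[L(t+1)-L(t)\mid\text{state}(t)]$. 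Using the queue recursion in~(\ref{eq:queue}), the definition of $\Delta_{ij}(t)$, and the bound $([x]^+)^2 \le x^2$, a routine expansion yields
\begin{equation*}
\Delta(t) \;\le\; B \;+\; \sum_{i,k} Q_i^{(k)}(t)\,\lambda_i^{(k)} \;-\; \sum_{(i,j)}\sum_k \mu_{ij}^{(k)}(t)\,W_{ij}^{(k)}(t),
\end{equation*}
where $B$ is a constant depending only on the maximum per-slot arrivals and the channel deposits $B_{i\leftrightarrow j}$, and the link sum ranges over directed links of each channel.

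The second step is to observe that DBR is precisely the greedy policy that, subject to the channel capacity constraint $\sum_k \mu_{ij}^{(k)}(t) + \sum_k \mu_{ji}^{(k)}(t)\le B_{i\leftrightarrow j}$, maximizes the last term of the drift bound: on each channel it picks, per direction, the commodity $k^*$ with largest CPI weight and pushes the full available balance, which solves the MaxWeight program~(\ref{eq:maxweight}) exactly (or to within a constant, if one is careful about the coupling between the two directions of a single channel). Consequently, for any other feasible routing policy $\pi$, the drift under DBR is no larger than the drift under $\pi$, up to the same additive constant~$B$.

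The third step is to invoke the existence of a ``nice'' stationary randomized policy for any arrival rate $\bm{\lambda}$ in the interior of the throughput region. By the definition of the throughput region given just before the theorem, some algorithm keeps the network stable and balanced under $\bm{\lambda}$; a standard time-averaging / Caratheodory argument then produces a stationary randomized policy $\pi^*$ whose expected per-slot service on each link $i\to j$ strictly dominates the flow conservation requirements for the queues and satisfies $\mathbb{E}[\mu_{ij}^{(k)}-\mu_{ji}^{(k)}]=0$ channel-by-channel, so that the expected contribution to both $\sum_{i,k} Q_i^{(k)}\lambda_i^{(k)}$ and $\sum \Delta_{ij}$-type terms is strictly negative by some slack $\epsilon>0$. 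Plugging $\pi^*$ into the drift upper bound and using the DBR minimality from the previous step gives $\Delta(t) \le B - \epsilon\bigl(\sum_{i,k}Q_i^{(k)}(t) + \beta\sum_{i\leftrightarrow j}|\Delta_{ij}(t)|\bigr)$ (or a quadratic analogue with $Q^2$ and $\Delta^2$). Applying the Foster--Lyapunov theorem, or simply summing over $t$ and telescoping, yields $\frac{1}{T}\sum_{t<T}\mathbb{E}[Q_i^{(k)}(t)] = O(1)$ and $\frac{1}{T}\sum_{t<T}\mathbb{E}[|\Delta_{ij}(t)|]=O(1)$, from which stability ($Q_i^{(k)}(t)/t\to 0$) and balance ($\Delta_{ij}(t)/t\to 0$) follow. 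Since $\bm{\lambda}$ was an arbitrary interior point, throughput optimality follows by taking a limit.

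The main obstacle I anticipate is the coupling between the two directions of every channel: unlike in standard wireless BackPressure, the capacity budget $B_{i\leftrightarrow j}$ is shared between $\mu_{ij}$ and $\mu_{ji}$, and the imbalance term $\Delta_{ij}$ in the Lyapunov function interacts with \emph{both} directions simultaneously. I would handle this by grouping the drift contributions per undirected channel and verifying that the per-channel MaxWeight subproblem is still a separable linear program in which DBR's ``best commodity on each side, push full balance'' rule is optimal; the $\beta\Delta_{ij}$ term inside $W_{ij}^{(k)}$ precisely accounts for the cross-direction coupling introduced by the Lyapunov function. Constructing the stationary randomized witness policy $\pi^*$ that is simultaneously throughput-feasible and channel-balanced is the secondary technical point, but it follows from the assumed supportability of $\bm{\lambda}$ together with a standard convex combination argument.
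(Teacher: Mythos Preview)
Your overall strategy matches the paper's: a quadratic Lyapunov function combining $\sum_{i,k}(Q_i^{(k)})^2$ and $\tfrac{\beta}{2}\sum_{i,j}\Delta_{ij}^2$, a one-step drift bound, and a MaxWeight comparison against a witness policy derived from the flow characterization of the throughput region. The paper's witness is exactly the deterministic ``oracle'' policy $\widetilde\mu_{ij}^{(k)}=f_{ij}^{(k)}$ from Lemma~\ref{lm:throughput-region}, so no Carath\'eodory construction is needed.

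There is, however, one genuine gap in your third step. You claim that the stationary balanced witness makes the $\Delta$-type cross term \emph{strictly negative}, yielding a drift bound of the form $B-\epsilon\bigl(\sum Q+\beta\sum|\Delta_{ij}|\bigr)$. That is not what happens. Under any balanced witness (i.e.\ one satisfying $\sum_k f_{ij}^{(k)}=\sum_k f_{ji}^{(k)}$ as in~(\ref{eq:channel-balance})), the imbalance cross term is
\[
\beta\sum_{i,j}\Delta_{ij}(t)\Bigl(\sum_k f_{ij}^{(k)}-\sum_k f_{ji}^{(k)}\Bigr)=0,
\]
exactly zero, not $\le -\epsilon\beta\sum|\Delta_{ij}|$. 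An $\epsilon$-slack in the interior of the throughput region buys you strict inequality in the flow-conservation constraints~(\ref{eq:flow-conservation}), hence the $-\epsilon\sum Q$ term, but the balance equalities~(\ref{eq:channel-balance}) have no slack to exploit. Consequently your Foster--Lyapunov conclusion of bounded time-average $|\Delta_{ij}|$ does not follow as written.

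The paper handles this correctly by not seeking negative drift in the $\Delta$ component at all: it simply bounds the conditional drift by a constant, $D(t)\le\beta c_1+c_2$, telescopes to get $\mathbb{E}[\Phi(t)]\le(\beta c_1+c_2)t+\mathbb{E}[\Phi(0)]$, and then uses $(\mathbb{E}|\Delta_{ij}(t)|)^2\le\mathbb{E}[\Delta_{ij}^2(t)]=O(t)$ to conclude $\mathbb{E}|\Delta_{ij}(t)|/t\to 0$. The same $O(\sqrt{t})$ argument gives $\mathbb{E}[Q_i^{(k)}(t)]/t\to 0$. Your proof can be repaired along the same lines: keep the $-\epsilon\sum Q$ term if you like (it gives the stronger bounded-time-average queue result), but for channel balance you must fall back to the linear-growth-of-$\Phi$ argument rather than a negative-drift one.
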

\noindent In other words, as long as there exists a routing algorithm that can keep the payment network stable and balanced, the \routingname  algorithm can also achieve that. The rest of \S\ref{sec:bp-performance} in below is the proof of Theorem~\ref{thm:bp}. We first introduce a lemma which characterizes the throughput region for a state channel network.

\begin{lemma}\label{lm:throughput-region}
An arrival rate vector $\bm{\lambda}=(\lambda_i^{(k)})_{i,k}$ is supportable if and only if there exist flow variables $\bm{f}=(f_{ij}^{(k)})_{i,j,k}$ that satisfy the following conditions:
\begin{align}
&\lambda_i^{(k)} + \sum_{j\in\mathcal{N}_i} f_{ji}^{(k)} - \sum_{j\in\mathcal{N}_i} f_{ij}^{(k)}\le 0,~\forall k,i\ne k \label{eq:flow-conservation}\\
& \sum_k f_{ij}^{(k)} = \sum_k f_{ji}^{(k)},~\forall i,j \label{eq:channel-balance}\\
& \sum_k f_{ij}^{(k)} + \sum_k f_{ji}^{(k)} \le B_{i\leftrightarrow j},~\forall i,j. \label{eq:channel-capacity}
\end{align}
\end{lemma}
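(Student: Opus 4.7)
The plan is to prove the two directions of the lemma separately using standard fluid-limit / flow-conservation arguments, essentially treating $f_{ij}^{(k)}$ as the long-run time average of the routing variable $\mu_{ij}^{(k)}(t)$.

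For the necessity (only if) direction, I would assume that some routing algorithm achieves stability and balance under $\bm{\lambda}$, and define
\[
f_{ij}^{(k)} \triangleq \limsup_{t\to\infty}\frac{1}{t}\sum_{\tau=0}^{t-1}\mathbb{E}\bigl[\mu_{ij}^{(k)}(\tau)\bigr].
\]
Condition \eqref{eq:flow-conservation} would then be obtained by dividing the queue recursion \eqref{eq:queue} by $t$, summing over $\tau$, and taking $t\to\infty$: the $Q_i^{(k)}(t)/t$ term vanishes by the stability assumption, leaving exactly the flow-conservation inequality (an inequality rather than equality because the $[\,\cdot\,]^+$ in \eqref{eq:queue} can discard mass when a queue would otherwise go negative). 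Condition \eqref{eq:channel-balance} follows directly from dividing $\Delta_{ij}(t)$ by $t$ and using the balance assumption $\lim \Delta_{ij}(t)/t = 0$. Condition \eqref{eq:channel-capacity} comes from the per-slot capacity constraint $\sum_k \mu_{ij}^{(k)}(t) \le c_{ij}(t)$ together with $c_{ij}(t) + c_{ji}(t) = B_{i\leftrightarrow j}$, which gives the per-slot bound $\sum_k\mu_{ij}^{(k)}(t) + \sum_k\mu_{ji}^{(k)}(t) \le B_{i\leftrightarrow j}$, and then time-averaging preserves the inequality.

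For the sufficiency (if) direction, given flows $\bm f$ satisfying \eqref{eq:flow-conservation}--\eqref{eq:channel-capacity}, I would construct an explicit (randomized, stationary) routing policy $\pi$ whose expected per-slot flow on link $i\to j$ for commodity $k$ equals $f_{ij}^{(k)}$, for instance by drawing $\mu_{ij}^{(k)}(t)$ from a distribution with the prescribed mean while respecting the instantaneous channel-capacity constraint. Then I would verify the two required long-run properties: balance follows immediately from \eqref{eq:channel-balance} since $\mathbb{E}[\Delta_{ij}(t)]/t \to 0$ by a strong law of large numbers for the martingale difference $\sum_k(\mu_{ji}^{(k)}(\tau)-\mu_{ij}^{(k)}(\tau)) - (\sum_k f_{ji}^{(k)} - \sum_k f_{ij}^{(k)})$; stability follows from \eqref{eq:flow-conservation} together with the usual Foster–Lyapunov argument applied to $V(\bm Q) = \tfrac12\sum_{i,k}(Q_i^{(k)})^2$, whose drift under $\pi$ is non-positive outside a bounded set once the rates are strictly feasible.

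The main obstacle is the \emph{boundary} of the rate region: the hypotheses only give weak inequalities, so the Lyapunov-drift argument does not directly yield negative drift. I would handle this by first proving sufficiency for strictly feasible rate vectors (those satisfying \eqref{eq:flow-conservation} and \eqref{eq:channel-capacity} with some $\varepsilon$ slack) using the Foster–Lyapunov argument above, and then extending to the boundary by a closure/limiting argument: if $\bm\lambda$ satisfies the conditions, then $(1-\varepsilon)\bm\lambda$ is strictly interior and supportable for every $\varepsilon>0$, so $\bm\lambda$ lies in the closure of the supportable set. Identifying the "throughput region" with this closure (as is standard in the BackPressure literature) completes the proof. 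A secondary subtlety is the second-moment bound needed for the Lyapunov drift, which would require a mild boundedness assumption on per-slot arrivals $a_i^{(k)}(t)$; I would state this as a standing assumption at the outset.
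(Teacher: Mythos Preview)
Your necessity direction is more rigorous than the paper's (which declares it ``trivial'' and gives only the intuitive reading of each condition as flow conservation, channel balance, and capacity), and it is correct.

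For sufficiency the paper does something much more elementary than what you propose: it simply sets $\mu_{ij}^{(k)}(t)=f_{ij}^{(k)}$ \emph{deterministically} for every slot. The key observation you are missing is that, because of condition~\eqref{eq:channel-balance}, this policy keeps every channel \emph{exactly} balanced at every step, so the instantaneous capacities $c_{ij}(t)$ stay constant after a one-time initial rebalancing (if $c_{ij}(0)<\sum_k f_{ij}^{(k)}$, node $j$ pushes tokens to node $i$ once; this costs at most $B_{i\leftrightarrow j}$ suboptimal transfers and does not affect the long-run limits). Rate stability then follows directly from~\eqref{eq:flow-conservation} without any Lyapunov machinery.

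Your route has two gaps. First, the randomized stationary policy you sketch is not obviously feasible in this model: unlike a wireless link, $c_{ij}(t)$ is \emph{stateful}, determined by all past $\mu$'s, so if your random draws cause $c_{ij}(t)$ to drift you may be unable to realize the prescribed mean in later slots; ``respecting the instantaneous channel-capacity constraint'' is precisely the hard part, and you have not said how to do it. The paper's deterministic choice sidesteps this because exact per-slot balance freezes the capacities. Second, your boundary treatment via closure does not prove the lemma as stated: ``supportable'' here means some algorithm actually achieves rate stability and balance, not that the point is a limit of such rates. Since the stability notion is only $Q_i^{(k)}(t)/t\to 0$, the deterministic policy supports boundary points directly, so no $\varepsilon$-slack or closure argument is needed.
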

\begin{proof}
The necessity of the above conditions is trivial. Inequality \eqref{eq:flow-conservation} corresponds to the flow conservation requirement. If it is violated, then the arrival rate to node $i$ is larger than the departure rate, and the state channel network is unstable. Equation \eqref{eq:channel-balance} corresponds to the channel balance requirement. If it is violated, then channel $i\leftrightarrow j$ is imbalanced. Inequality \eqref{eq:channel-capacity} corresponds to the channel capacity constraint, since the sum of tokens transferred over each channel $i\leftrightarrow j$ cannot exceed the total channel deposit $B_{i\leftrightarrow j}$.

In order to prove the sufficiency of the above conditions, we construct an algorithm that can stabilize and balance the state channel network when the arrival rate vector $\bm{\lambda}$ satisfies \eqref{eq:flow-conservation}-\eqref{eq:channel-capacity}. The algorithm is straightforward: in each slot $t$, set the routing variable $\mu_{ij}^{(k)}(t)=f_{ij}^{(k)}$ for any $i,j,k$. Clearly, under this routing algorithm every channel $i\leftrightarrow j$ remains balanced in each slot $t$ since $\sum_k \mu_{ij}^{(k)}(t) = \sum_k f_{ij}^{(k)} = \sum_k f_{ji}^{(k)} = \sum_k \mu_{ji}^{(k)}(t)$. Moreover, the network is stable under the algorithm since the flow conservation requirement is satisfied for every node. Note that each link $i\rightarrow j$ may have insufficient fund initially (i.e., $c_{ij}(0) < \sum_k f_{ij}^{(k)}$) such that the routing decision is infeasible. In this case, we can let node $j$ transfer some tokens to node $i$ at the beginning in order to equalize the fund at both ends of the state channel.  Such an adjustment process incurs at most $B_{i\leftrightarrow j}$ sub-optimal transfers for each channel $i\leftrightarrow j$ and does not influence network stability and channel balance in the long term.

Therefore, equations \eqref{eq:flow-conservation}-\eqref{eq:channel-capacity} are a necessary and sufficient condition for an arrival rate vector $\bm{\lambda}$ to be supportable.
\end{proof}

It should be noted that the routing algorithm mentioned in the proof of Lemma \ref{lm:throughput-region} cannot be implemented in practice since we do not know the external  arrival rate vector $\bm{\lambda}$ in advance. In the following, we prove that \routingname  can achieve the same throughput performance without knowing any payment traffic statistics in advance.

\vspace{3mm}

By Lemma \ref{lm:throughput-region}, if an arrival rate vector $\bm{\lambda}$ belongs to the throughput region, it must satisfy \eqref{eq:flow-conservation}-\eqref{eq:channel-capacity} and can be supported by the algorithm mentioned in the proof of Lemma \ref{lm:throughput-region} (referred to as the \emph{optimal oracle algorithm}). In the following, denote by $\widetilde{\mu}_{ij}^{(k)}(t)$ the routing decision made by the optimal oracle algorithm in slot $t$. By the nature of the optimal oracle algorithm, we have $\widetilde{\mu}_{ij}^{(k)}(t)=f_{ij}^{(k)}$ for any $t$ (if ignoring the initial fund adjustment process). 


Define the \emph{Lyapunov function} as follows:
\begin{equation}\label{eq:potential}
\Phi(t)=\sum_{i,k}\Big(Q_i^{(k)}(t)\Big)^2 +  \frac{\beta}{2}\sum_{i,j}  \Delta_{ij}^2(t).
\end{equation} 
Also define the \emph{conditional Lyapunov drift} as $D(t)\triangleq \mathbb{E}[\Phi(t+1)-\Phi(t)|\mathbf{Q}(t),\Delta(t)]$, where the expectation is with respect to the randomness of arrivals. To facilitate the analysis, we assume that the amount of new payments arrivals to the network in each slot is bounded by some constant. By equation \eqref{eq:queue}, we have
\begin{equation}\label{eq:first-term}
\begin{split}
\Big(Q^{(k)}_i(t+1)\Big)^2&= \Big[Q_i^{(k)}(t)+a_i^{(k)}(t)+\sum_{j} \mu_{ji}^{(k)}(t)-\sum_{j}\mu_{ij}^{(k)}(t)\Big]^2\\
&= \Big(Q_i^{(k)}(t)+\sum_{j} \mu_{ji}^{(k)}(t)-\sum_{j}\mu_{ij}^{(k)}(t)\Big)^2 + \Big(a_i^{(k)}(t)\Big)^2 \\
&~~ + 2a_i^{(k)}(t)\Big(Q_i^{(k)}(t)+\sum_{j} \mu_{ji}^{(k)}(t)-\sum_{j}\mu_{ij}^{(k)}(t)\Big)\\
& \le  \Big(Q^{(k)}_i(t)\Big)^2 - 2Q^{(k)}_i(t)\Big(\sum_{j} \mu_{ij}^{(k)}(t)-\sum_{j}\mu_{ji}^{(k)}(t)\Big) \\
&~~ + 2a_i^{(k)}(t)Q_i^{(k)}(t) + const,
\end{split}
\end{equation}
where the inequality is due to the assumption that the arrival $a_i^{(k)}(t)$ in each slot $t$ is bounded by some constant and the fact that the number of transferred tokens in each slot is also bounded (since $\mu_{ij}^{(k)}(t)\le B_{i\leftrightarrow j}$). Now we have
\[
\begin{split}
&\sum_{i,k}\Big(Q^{(k)}_i(t+1)\Big)^2 - \sum_{i,k}\Big(Q^{(k)}_i(t)\Big)^2 \\
\le &~const - 2\sum_{i,k}Q^{(k)}_i(t)\Big(\sum_{j} \mu_{ij}^{(k)}(t)-\sum_{j}\mu_{ji}^{(k)}(t)\Big) + 2\sum_{i,k}a_i^{(k)}(t)Q_i^{(k)}(t)\\
=&~const - 2\sum_{i,j}\sum_{k}\mu_{ij}^{(k)}(t)\Big(Q_{i}^{(k)}(t)-Q_j^{(k)}(t)\Big) + 2\sum_{i,k}a_i^{(k)}(t)Q_i^{(k)}(t),
\end{split}
\]
where we rearrange the sum in the above equality. Similarly, noticing that 
\[
\Delta_{ij}(t+1)=\Delta_{ij}(t)+\sum_k \mu_{ji}^{(k)}(t)-\sum_k \mu_{ij}^{(k)}(t),
\]
we can prove
\begin{equation}\label{eq:second-term}
\begin{split}
\frac{\beta}{2}\sum_{i,j} \Delta_{ij}^2(t+1)-\frac{\beta}{2}\sum_{i,j}\Delta_{ij}^2(t) &\le \beta \cdot const - \beta\sum_{i,j}\sum_k\Delta_{ij}(t)\Big( \mu_{ij}^{(k)}(t)- \mu_{ji}^{(k)}(t)\Big)\\
& = \beta \cdot const - \beta\sum_{i,j}\sum_k\mu^{(k)}_{ij}(t)\Big( \Delta_{ij}(t)- \Delta_{ji}(t)\Big)\\
& = \beta \cdot const - 2\sum_{i,j}\sum_k\mu^{(k)}_{ij}(t)\beta \Delta_{ij}(t),
\end{split},
\end{equation}
where we use the fact that $\Delta_{ij}(t)=-\Delta_{ji}(t)$. As a result, by combining \eqref{eq:first-term} and \eqref{eq:second-term},  the conditional Lyapunov drift can be bounded by
\[
\begin{split}
D(t) &\le \beta c_1 + c_2 - 2\sum_{i,j}\sum_k\mu^{(k)}_{ij}(t)\Big(Q_{i}^{(k)}(t)-Q_j^{(k)}(t)+\beta  \Delta_{ij}(t)\Big) + 2\sum_{i,k}\lambda_i^{(k)}Q_i^{(k)}(t)\\
& \le \beta c_1 + c_2 - 2\sum_{i,j}\sum_k  \widetilde{\mu}^{(k)}_{ij}(t)\Big(Q_{i}^{(k)}(t)-Q_j^{(k)}(t)+\beta  \Delta_{ij}(t)\Big) + 2\sum_{i,k}\lambda_i^{(k)}Q_i^{(k)}(t)\\
& = \beta c_1 + c_2 - 2\sum_{i,j}\sum_k   f_{ij}^{(k)} \Big(Q_{i}^{(k)}(t)-Q_j^{(k)}(t)+\beta  \Delta_{ij}(t)\Big) + 2\sum_{i,k}\lambda_i^{(k)}Q_i^{(k)}(t)\\
& = \beta c_1 + c_2 + 2\sum_{i,k} Q_i^{(k)}(t)\Big(\lambda_i^{(k)} + \sum_{j} f_{ji}^{(k)} - \sum_{j} f_{ij}^{(k)}\Big) - \beta \sum_{ij}   \sum_k \Delta_{ij}(t)\Big( f_{ij}^{(k)}- f_{ji}^{(k)}\Big)\\
& \le \beta c_1 + c_2,
\end{split}
\]
where $c_1$ and $c_2$ are some constants, the second inequality is due to the operation of \routingname (see \eqref{eq:maxweight}), and the last inequality is due to \eqref{eq:flow-conservation} and \eqref{eq:channel-balance}. Using the law of iterated expectations
yields:
\[
\mathbb{E}[\Phi(\tau+1)]-\mathbb{E}[\Phi(\tau)]\le \beta c_1 + c_2.
\]
Summing over $\tau=0,\cdots,t-1$, we have
\[
\mathbb{E}[\Phi(t)]-\mathbb{E}[\Phi(0)] \le(\beta c_1 + c_2) \cdot t.
\]
Then we have 
\begin{equation}\label{eq:drift-sum}
\begin{split}
\sum_{i,k}\mathbb{E}\Big[\Big(Q_i^{(k)}(t)\Big)^2\Big] +  \frac{\beta}{2}\sum_{i,j}  \mathbb{E}\Big[\Delta_{ij}^2(t)\Big] &\le (\beta c_1 + c_2)t + \mathbb{E}[\Phi(0)].
\end{split}
\end{equation}
To show that \routingname achieves channel balance, we note from \eqref{eq:drift-sum} that
\[
\frac{\beta}{2} \Big(\mathbb{E}[\sum_{i,j} |\Delta_{ij}(t)|]\Big)^2\le \frac{\beta}{2}\sum_{i,j}  \mathbb{E}\Big[\Delta_{ij}^2(t)\Big]\le (\beta c_1 + c_2) t + \mathbb{E}[\Phi(0)],
\]
where the first inequality holds because the variance of $|\bm{\Delta}(t)|$ cannot be negative, i.e., $\text{Var}(|\bm{\Delta}(t)|)=\mathbb{E}\Big[\sum_{i,j}\Delta_{ij}^2(t)\Big] -  \Big(\mathbb{E}[\sum_{i,j} |\Delta_{ij}(t)|]\Big)^2 \ge 0$. Thus we have
\[
\mathbb{E}[\sum_{i,j} |\Delta_{ij}(t)|]\le \sqrt{2c_1t +\frac{2c_2 t}{\beta} +\frac{2 \mathbb{E}[\Phi(0)]}{\beta}}.
\]
Since $\mathbb{E}[\Phi(0)]<\infty$, we have that for any payment channel $i\leftrightarrow j$
\[
\lim_{t\rightarrow\infty} \mathbb{E}\Big[ \frac{|\Delta_{ij}(t)|}{t} \Big] = 0,
\]
i.e., the network maintains channel balance under the \routingname algorithm.

Similarly, we can show that
\[
\sum_{i,k}\mathbb{E}[Q_{i}^{(k)}(t)] \le \sqrt{(\beta c_1 +c_2)t + \mathbb{E}[\Phi(0)]},
\]
which implies that
\[
\lim_{t\rightarrow\infty}\frac{\mathbb{E}[Q_i^{(k)}(t)]}{t} = 0,~\forall i,k,
\]
i.e., the network is stable under the \routingname algorithm.

\subsection{Discussions of \routingname}
\subsubsection{Failure Resilience}
Due to its adaptive and multi-path nature, the \routingname algorithm is inherently robust against network failures. For example, when there are unresponsive nodes, \routingname  can quickly adapt and support the maximum possible throughput over the remaining available nodes.

\subsubsection{Privacy}
Due to the multi-path nature of \routingname, any intermediate node can only access the information for a small fraction of each value transfer request. As a result, the \routingname algorithm naturally provides good privacy protection in terms of the amount of transferred values. However, in \routingname, each node does need to know the destination of each value transfer request in order to place it in a proper debt queue. If we also need to hide the payment destination, onion routing \cite{onion} can be used in conjunction with \routingname. In onion routing, messages are encapsulated in layers of encryption. The encrypted data is transmitted through a series of network nodes called \emph{onion nodes}, each of which ``peels" away a single layer, uncovering the data's next destination. When the final layer is decrypted, the message arrives at its destination. We can direct payments through an overlay network consisting of onion nodes and apply \routingname to optimally route payments among onion nodes.

\subsection{Simulation Results}
\label{sec:routing:eval}
We numerically compare the performance of \routingname with two existing routing algorithms in payment state channel networks: SpeedyMurmurs \cite{speedy} and Flare \cite{flare} (used in Lightning Network). The simulation is conducted on the topology given in Figure \ref{fig:topology-medium}.

\begin{figure}[ht!]
\begin{center}
\includegraphics[width=2.8in]{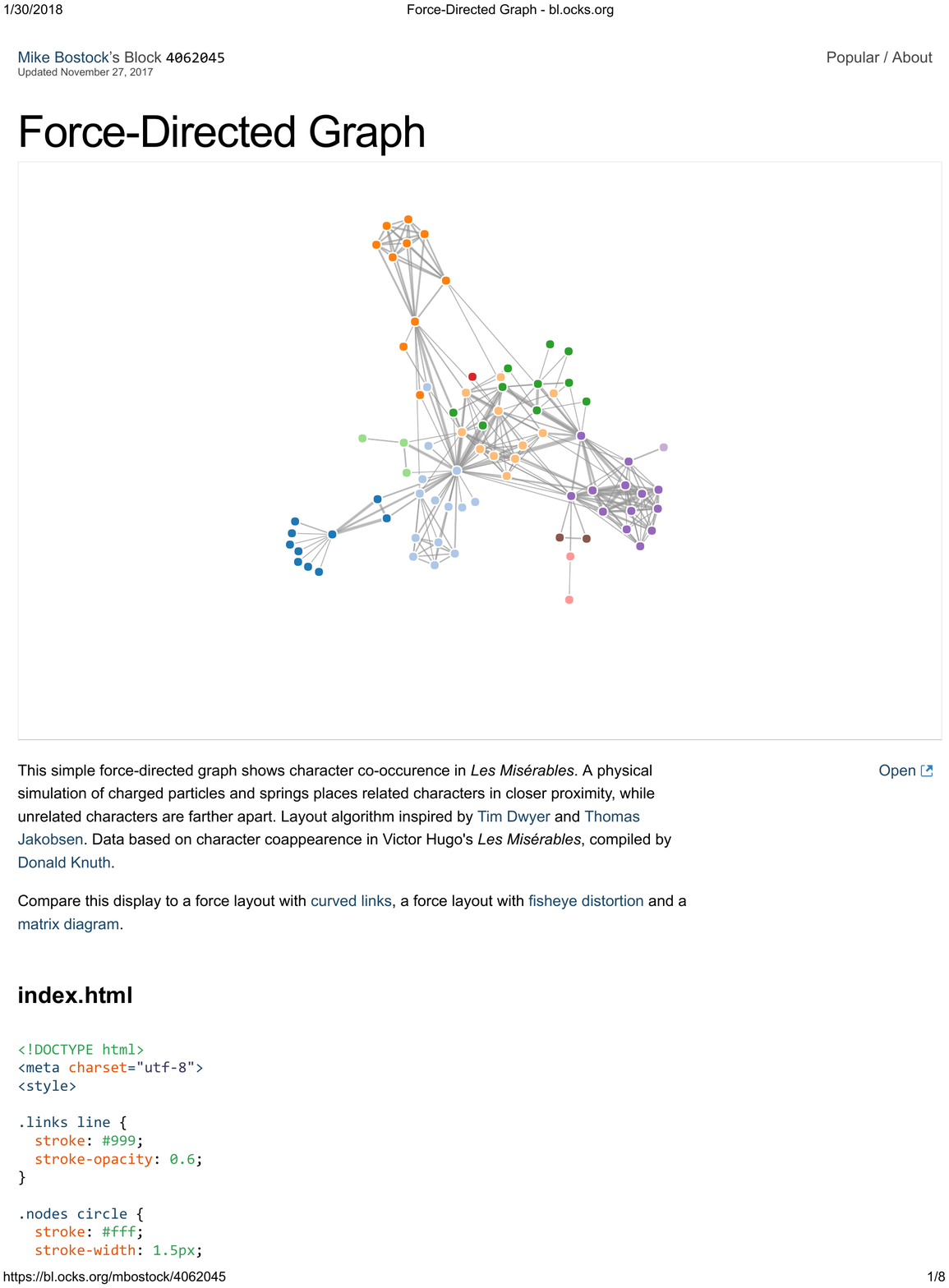}
\caption{Payment channel network topology used in simulations (77 nodes, 254 bi-directional payment channels). The payment channel network is running with 40 payment flows with randomly chosen source-destination pairs. The initial deposit for each channel is uniformly distributed within [100,200] tokens. Payment arrivals follow a Poisson process and the size of each payment follows a geometric distribution with the mean of 3 tokens.}
\label{fig:topology-medium}
\end{center}
\end{figure}

\begin{figure*}[ht!]
\centering
\begin{minipage}[t]{0.45\textwidth}
\centering
\includegraphics[width=2.2in, height=1.9in]{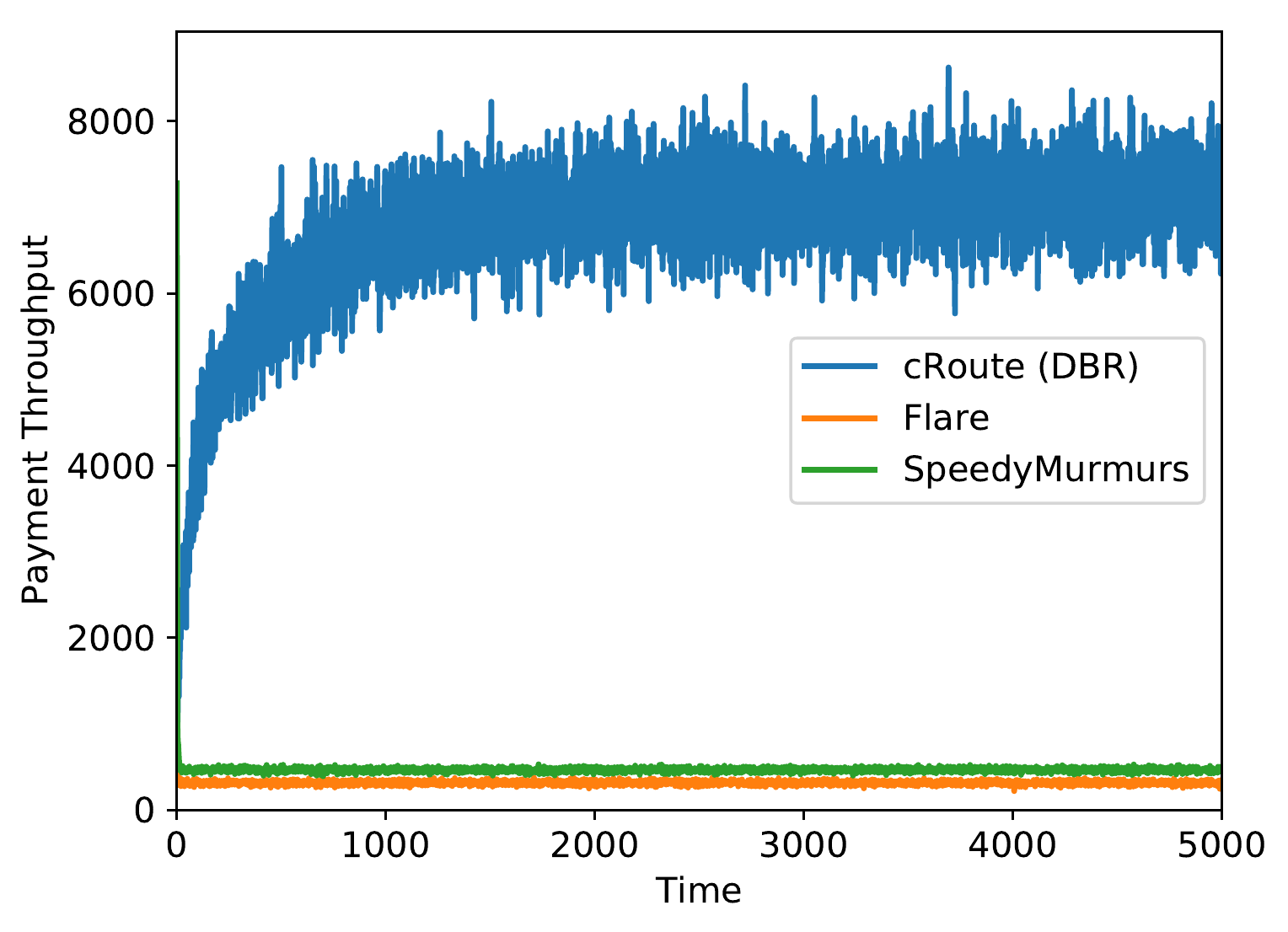}
\caption{Instant Payment throughput comparison among DBR (avg: 6748 payments/slot), SpeedyMurmurs (avg: 467 payments/slot) and Flare (avg: 316 payments/slot).}
\label{fig:instant_throughput-medium}
\end{minipage}%
\hspace{0.5cm}
\begin{minipage}[t]{0.45\textwidth}
\centering
\includegraphics[width=2.2in, height=1.9in]{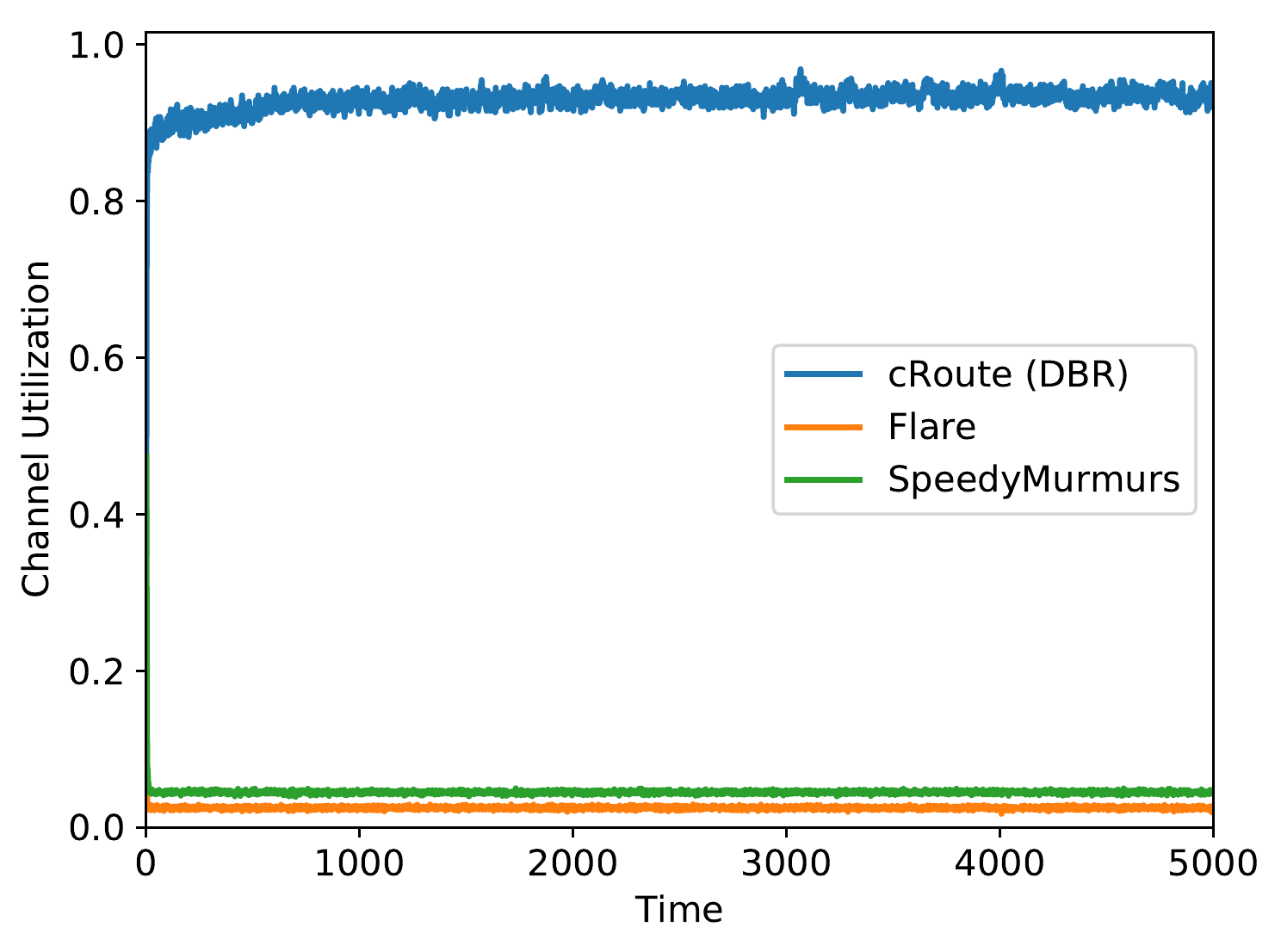}
\caption{Channel utilization comparison among DBR, SpeedyMurmurs and Flare. Higher channel utilization implies a higher level of channel balancing.}
\label{fig:channel-medium}
\end{minipage}
\end{figure*}



Figure \ref{fig:instant_throughput-medium} shows the throughput performance comparison. It is observed that \routingname achieves 15x improvement in average payment throughput as compared to the existing routing algorithms. The exceptional throughput performance of \routingname is due to its channel-balancing and congestion-aware nature. In particular, Figure \ref{fig:channel-medium} illustrates the channel utilization\footnote{Channel utilization corresponds to the ratio between the amount of transferred tokens in each time slot and the total token deposits of all channels. For example, if the total amount of deposits is 100 and only 50 of them are moved in slot $t$, then the overall channel utilization in that slot is 50\%.} under \routingname, SpeedyMurmurs and Flare, where we can observe that \routingname consistently achieves high (nearly 100\%) channel utilization while the other routing algorithms only achieve less than 5\% channel utilization due to the lack of channel balancing.
\section{cOS: Off-chain Decentralized Application Operating System}
\label{sec:cOS}


To help everyone quickly build, operate, and use scalable off-chain decentralized applications without being hassled by the additional complexities introduced by off-chain scaling, \systemname innovates on a higher level abstraction: cOS, a combination of application development framework (SDK) and runtime system. This section provides the high-level vision, design objectives and illustrations on cOS. 

\subsection{Directed Acyclic Graph of Conditionally Dependent States}
\label{sec:dag}
In this section, we provide a view of our abstraction model on the construct of off-chain applications, and describe how the model is integrated with state channel networks. In order to support use cases beyond simple P2P payments, we model a system of off-chain applications as a directed acyclic graph (DAG) of conditionally dependent states, where the edges represent the dependencies among them.

\begin{figure}[ht!]
\begin{center}
\includegraphics[width=4in]{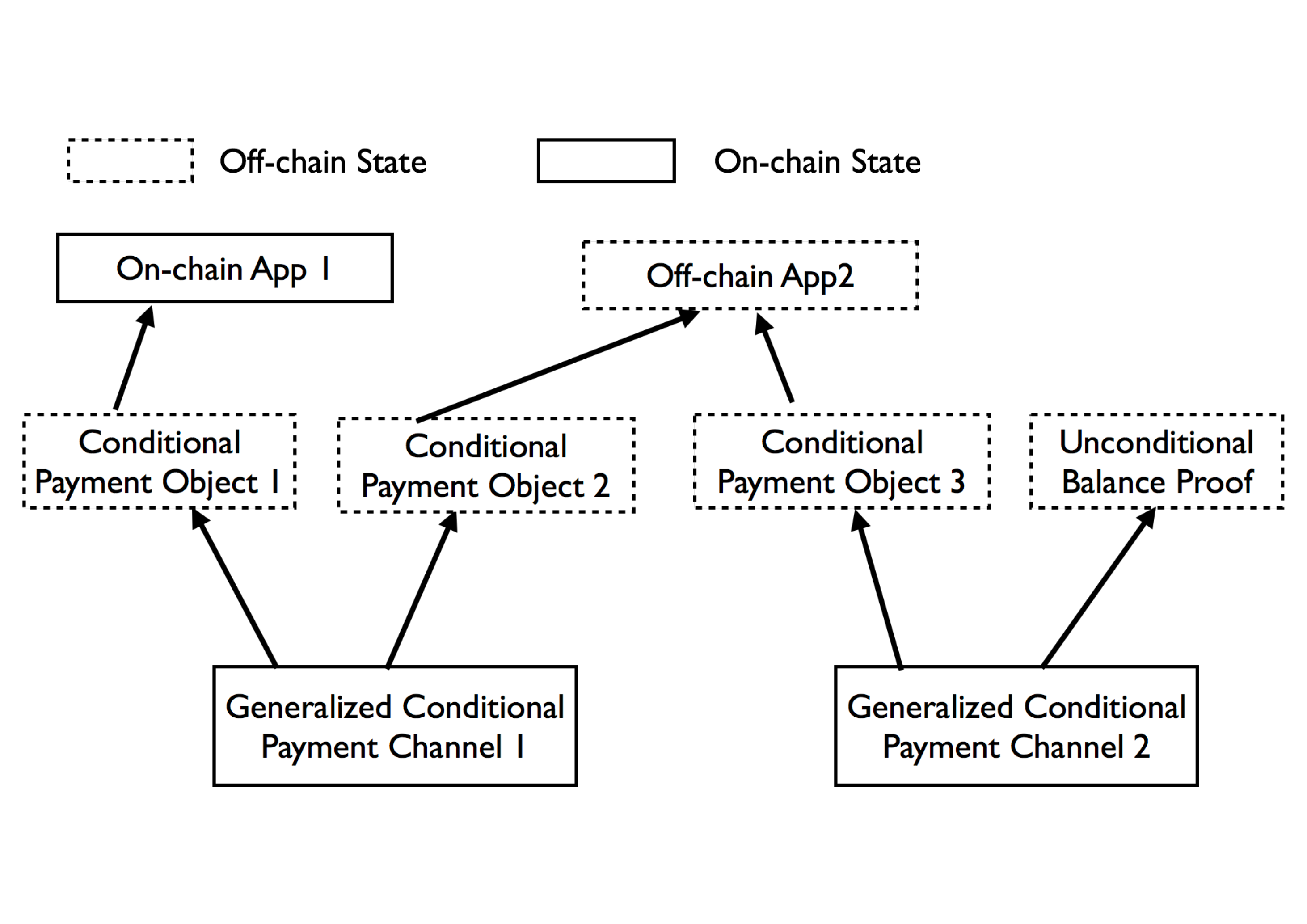}
\caption{DAG of Conditionally Dependent States}
\label{fig:off_chain_dag}
\end{center}
\end{figure}

Figure~\ref{fig:off_chain_dag} illustrates the system model, where the Generalized Conditional Payment Channel in the payment networks are the only contracts with on-chain states. The settlement of these on-chain states depends on one or more conditional payment objects (e.g. Conditional Payment Object 3), which are completely off-chain but on-chain enforceable. We want to highlight that these conditional payment objects are not only simple time hash locked transactions, but can be conditioned on off-chain application contract states, such as ``Off-chain App X'' in Figure~\ref{fig:off_chain_dag}. 

The conditional payment objects can be relayed through multiple hops just like simple unconditional payments objects. For example, Payment Channel 1 can be a channel connecting Alice and Bob and Payment Channel 2 can be a channel connecting Bob and Carl. Let ``Off-chain App 2'' be an off-chain chess game Alice is playing with Carl, and suppose Alice wants to express the semantic of ``Alice will pay Carl 10 ETH if Carl wins the game''. Even without a direct channel between Alice and Carl, Alice can send a conditional payment to Carl through Bob with two layers of conditional locks. The first layer is a simple time hashed lock to make sure that Bob relays and resolves the payment in a reasonable amount of time. The second layer locks the payment conditioning on the result of the chess game. With this two-hop relay, the conditional payment between Alice and Carl can be settled via Bob even though Bob did not involve in the chess game. This is a minimized example of how a dependency graph formed by generalized state channels, conditional payment objects and off-chain applications can support arbitrarily complex multi-party interactions.

Note that off-chain objects do not have to depend only on off-chain objects. For example, Alice can pay Carl when the latter successfully transfers a certain ENS name to the former. In other words, the payment depends on the off-chain condition that the owner of the ENS name changes from Carl to Alice.

Also, off-chain payment objects do not always have to be conditional: a conditional payment object can ``degenerate'' into an unconditional balance proof as the application runs. More generally speaking, conditional dependencies are transient by nature: application state updates are done via a pair of two topological traversals of the underlying state graph. The first traversal goes in the forward direction and the second one in the reverse direction. The forward traversal, starting from the on-chain state-channel contracts, creates additional transient conditional dependency edges and modifies existing ones. The reverse traversal may remove existing transient conditional dependency edges, because some conditions evaluate to constant true when traversing backward.


\subsection{Off-chain Application Development Framework}

Much like how modern high-level languages and operating systems abstract away the details about the underlying hardware, the complexity of interacting with conditional state dependency graphs necessitates a dedicated development framework. With the principle of ``ease of use'' on our mind, \systemname presents the cOS SDK, a complete toolchain solution for the creation, tracking, and resolution of states in off-chain applications. We hope that the SDK will accelerate the adoption of the off-chain scaling solution and the payment network provided by \systemname, fostering a strong ecosystem.

\begin{figure}[ht!]
\begin{center}
\includegraphics[width=4in]{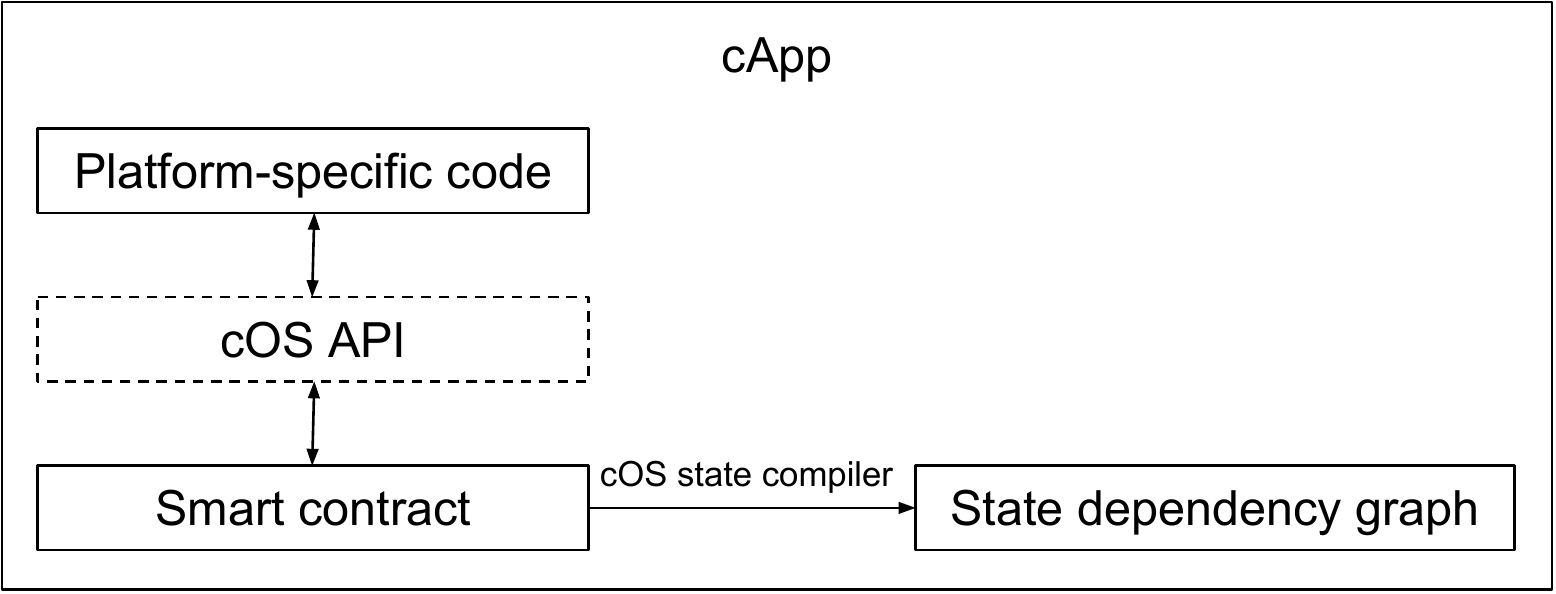}
\caption{Structure of a decentralized application on Celer Network (cApp)}
\label{fig:cApp_structure}
\end{center}
\end{figure}

In general, we categorize decentralized applications into two classes: simple pay-per-use applications and more complex multi-party applications. The pay-per-use applications include examples like the Orchid Protocol, where the user keeps receiving microservices (e.g. data relay) from a real-world entity and streams payments through the payment network. Since there is no need for conditional dependency on other off-chain states, a lean transport layer API on top of the routing layer, both of which are provided by \systemname, suffice for such cases.

The class of multi-party applications, the general structure of which is illustrated in Figure~\ref{fig:cApp_structure}, is where the idea of conditional state dependency graphs really shines. The SDK defines a set of design patterns and a common framework for developers to express the conditional dependencies. We plan to extend the existing smart contract languages with modern software construction techniques such as metaprogramming, annotation processing, and dependency injection so that the dependency information can be written out explicitly without being too intrusive. A compiler then processes the application code, extracts the declared off-chain objects, and generates the conditional dependency graph. The compiler detects invalid or unfulfillable dependency information and generates human-readable errors to assist the developer in debugging. To help developers reason about the dependencies even further, the SDK will be able to serialize the graphs into common formats such as Graphviz, with which they can be easily visualized and presented.

The SDK also provides a code generator that generates a set of ``bridge methods'' for interacting with smart contracts whose code is available at compile time. The code generator parses the application binary interface (ABI), which specifies the signature of all callable functions in a smart contract, and generates the corresponding bridge methods in platform-specific languages such as Java. The main advantage of this approach is type safety: the glue methods replicate the method signatures of the functions in the smart contract faithfully, providing a static and robust compile-time check before dispatching the method to the cOS runtime for execution.

\subsection{Off-chain Application Runtime}

The cOS runtime serves as the interface between cApps\footnote{We name decentralized applications running on Celer Network as cApps.} and the \systemname transport layer. It supports cApps in terms of both network communication and local off-chain state management. The overall architecture is illustrated in Figure~\ref{fig:cOS_architecture}.

\begin{figure}[ht!]
\begin{center}
\includegraphics[width=4in]{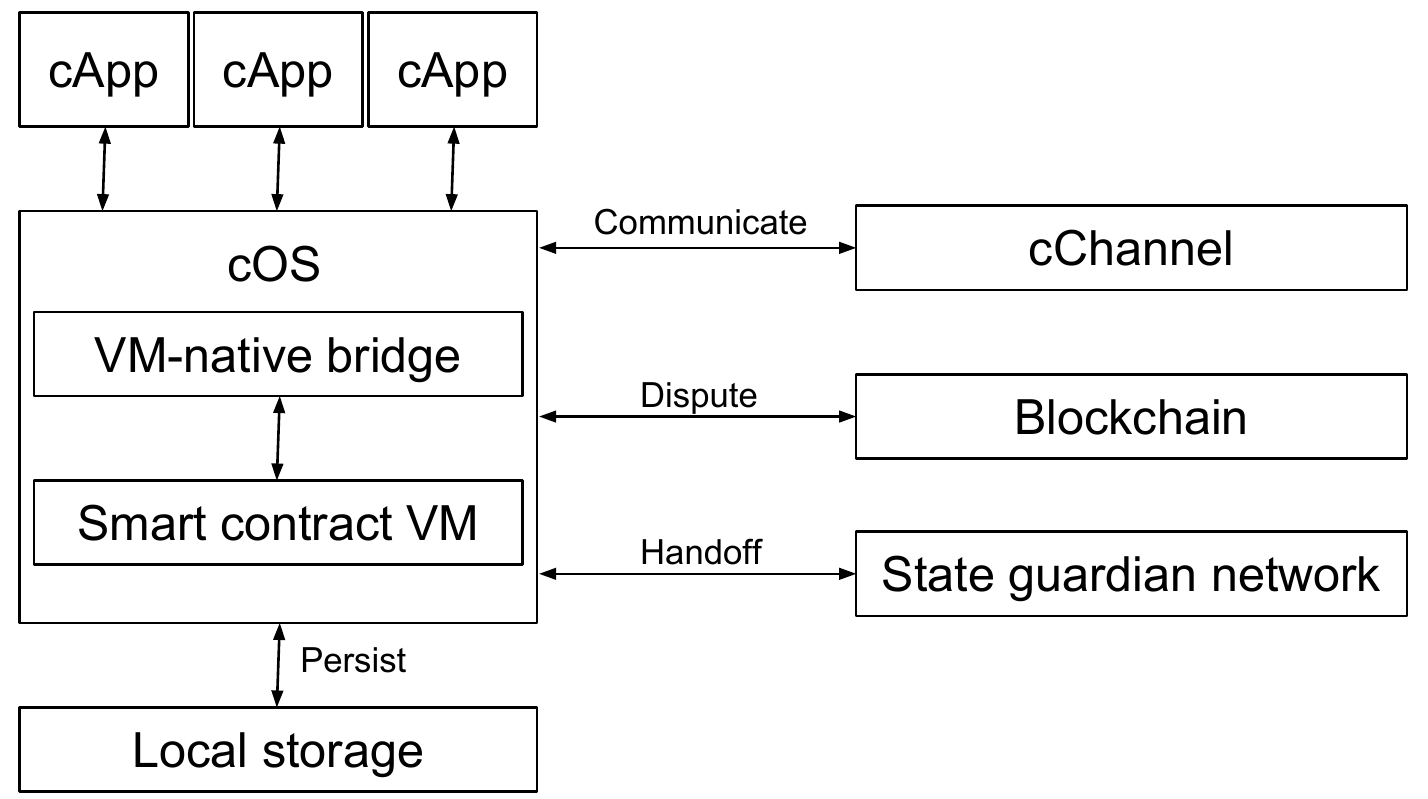}
\caption{cOS Runtime Architecture}
\label{fig:cOS_architecture}
\end{center}
\end{figure}

On the network front, the runtime handles multi-party communication during the lifecycle of a cApp. It also provides a set of primitives for secure multi-party computation capable of supporting complex use cases such as gaming. In the case of counter-party failure, whether fail-stop or Byzantine, the runtime relays disputes to the on-chain state. In the case of the client going offline, the runtime handles availability offloading to the State Guardian Network. When the client comes back online, the runtime synchronizes the local states with the State Guardian Network.

For local off-chain state management, the conditional state graphs synthesized by the cOS SDK is bundled within the cApp and passed on to the runtime for off-chain execution. The runtime serves as the infrastructure to create, update, store and monitor off-chain states locally on \systemname clients. It tracks the internal logic of the applications running on top of it and performs the DAG traversal of state updates as outlined in \S~\ref{sec:dag}. It also gracefully handles payment reliability issues such as insufficient capacity for routing the payment. 
 
At its core, the cOS runtime bundles a native virtual machine (VM) for running smart contracts. While we intend to deploy cOS to many platforms including desktop, web, mobile and IoT devices, we have adopted the ambitious design principle of ``write once, run anywhere''. In other words, we enable developers to write the common business logic once and run the exact same on-chain smart contract code in every environment as opposed to having to implement multiple variants of the same logic. By adopting this principle, we aim to eliminate code duplication and ensure high degree of consistency across various platforms.

The platform-specific part of cApps, such as user interface (UI), can be built in languages most suitable to each platform (eg. Kotlin for Android and Swift for iOS). The UI code is also free to use platform-specific utilities and libraries, so that the look and feel of cApps match the respective design guidelines on each platform.

The cOS runtime provides VM-native bridge implementations in different languages for the platform-specific code to interact with the underlying business logic. For example, consider a cApp representing a chess game running on iOS with the UI written in Swift and business logic written in Solidity. Naturally, the UI layer will need to query the cOS VM for the state of the game board, and it will be able to do so via the Solidity-Swift bridge. Because the code for the contract is available at compile time, the code generator cOS SDK would have generated a bridge method named $chess.getBoardState$, which is dispatched to the VM for the actual query. Whenever possible, we make use of the language's foreign function interface (eg. JNI) to reduce the overhead of calling back-and-forth between smart contracts and the native code. The developer will also be able to use the same debugging and profiling tools for on-chain smart contracts in the off-chain development scenario.

In order to genuinely replicate the state changes that would have happened on-chain in the off-chain environment, the VM progresses through the same bytecode as if they were executed on-chain, with the caveat of a few differences. The first major difference is that the VM needs to update and store the states locally instead of on the blockchain. To achieve seamless and transparent inter-operation between the VM and the rest of cOS, we will implement a set of APIs that bridge platform-specific storage backends with the VM. The second major difference is that as opposed to being always online, a local VM can shut down unexpectedly at any time due to software bugs, hardware failure or simply loss of power. To avoid corruption of local states, we need to implement a robust logging, checkpointing and committing protocol. A third minor difference is that the logic for gas metering can be omitted, because the execution happens locally and it does not make sense to charge gas fees.

The bundled VM needs to be lightweight and performant so that it can run well on mobile and IoT devices, which tend to operate under tight processor power, memory capacity and battery life constraints. While we currently embed a lightweight Ethereum VM in cOS, we are researching into adopting more common bytecode formats (eg. WebAssembly) with the goal of supporting more contract languages and other blockchains.

In our ultimate vision of the cOS VM, we will apply modern VM techniques such as ahead-of-time compilation (AOT) and just-in-time compilation (JIT) to achieve near-native performance of off-chain smart contract execution. Instead of interpreting the smart contract bytecodes like what most Ethereum VMs currently do, we compile the bytecodes to lower level intermediate representations that are closer to native code. If the code for a certain contract is available at compile time (eg. a contract that is already deployed on-chain), we perform the compilation ahead of time and statically link the binary with the rest of the application. For the contracts that are dynamically loaded at runtime, we profile them for frequently-called functions (i.e. ``hot'' code) and perform just-in-time compilation. We believe that the combination of these two techniques will bring a great balance between performance and energy consumption, which are both crucial for mobile and IoT devices.

\section{cEconomy: Off-chain Cryptoeconomics Mechanism Design}
\label{sec:econ}

The native digital cryptographically-secured protocol token of the Celer Network, (\tokenname) is a major component of the ecosystem on the Celer Network, and is designed to be used solely on the network. \tokenname is a non-refundable functional utility token which will be used as the platform currency in the ecosystem on the Celer Network. \tokenname does not in any way represent any shareholding, participation, right, title, or interest in the Token Vendor, the Foundation, their affiliates, or any other company, enterprise or undertaking, nor will \tokenname entitle token holders to any promise of fees, revenue, profits or investment returns, and are not intended to constitute securities in Singapore or any relevant jurisdiction. \tokenname may only be utilized on the Celer Network, and ownership of \tokenname carries no rights, express or implied, other than the right to use \tokenname as a means to enable usage of and interaction with the Celer Network.

In the following, we introduce Celer Network's cryptoeconomics mechanisms, cEconomy, whose design is based on the principle that a good cryptoeconomics model (token model) should provide additional values and enable new game-theoretical dynamics that are otherwise impossible. 
In the following, we first elaborate the fundamental tradeoffs in off-chain ecosystems (Section \ref{sec:tradeoff}) and then demonstrate how cEconomy can bring value and enable new dynamics to ``balance out" those tradeoffs (Section \ref{sec:ceconomy_design}).

\subsection{Tradeoffs in Off-chain Ecosystems}\label{sec:tradeoff}
Any off-chain solution, while gaining scalability, is also making tradeoffs. In the following, we describe two fundamental tradeoffs in off-chain ecosystems: scalability-liquidity tradeoffs and  scalability-availability tradeoffs.
\subsubsection{Off-Chain Scalability vs. Liquidity}
Off-chain platform gains scalability by first trading off network liquidity. For example, in a bi-party payment state channel, the two involved parties can safely send each other payments at high speeds without hitting the underlying blockchain because they have deposited liquidity to the on-chain bond contract at the beginning. Liquidity-locking of this nature works is fine for the end users because the end users can simply deposit their own liquidity to the open channels and enjoy the scalable dApps. However, it poses a significant challenge for those who want to operate as an Off-chain Service Providers (OSPs). Using state channels as an example,  OSPs need to make deposits in each channel with outgoing payment possibility. Those deposits can easily aggregate to an astronomical amount. Even though Celer Network’s sidechain channels can significantly reduce the level of liquidity requirement, each block proposer still needs to deposit fraud-proof bonds proportional to the level of value transfer ``at stake". 

All in all, significant amount of liquidity is needed to provide effective off-chain services for global blockchain users. However, whales may not have the business interest or technical capability to run an off-chain service infrastructure, while people who have the technical capability of running a reliable and scalable off-chain service often do not have enough capital for channel deposits or fraud-proof bonds. Such a mismatch creates a huge hurdle for the mass adoption and technical evolution of off-chain platforms. If not mitigated, eventually only the rich can serve as OSPs. This high capital barrier of becoming an OSP will result in a centralized network that providing undermines the entire premise of blockchain's decentralization vision. From a more practical view, censorship, poor service quality and privacy breach will hurt users as today's centralized services do. 

\subsubsection{Off-Chain Scalability vs. Availability}\label{sec:availability}
While an off-chain platform improves scalability by bringing application states off-chain, it imposes an impractical ``always online" responsibility on the users, because the off-chain states should always be available for on-chain disputes. For example, in a bi-party payment state channel, if one party goes offline, the counterparty may get hacked or act maliciously, and try to settle an old but more favorable state for himself. The data availability issue is even more critical in a sidechain channel where block proposers need to be independently monitored and validated while the participants are offline; this is a matter of security and should be scrutinized carefully. This challenge is even more critical in machine to machine communication scenarios where IoT devices are not likely to be online all the time. Therefore, it is crucial to design proper mechanisms that guarantee data availability in an off-chain platform. Solving this challenge requires systematic thinking of the entire off-chain ecosystem and existing solutions all fail to provide the important properties of decentralization, efficiency, simplicity, flexibility, and security as we will discuss more in the following section.

\subsection{cEconomy Design}\label{sec:ceconomy_design}
To balance the above-mentioned tradeoffs, we propose a suite of cryptoeconomics mechanisms called cEconomy that includes three tightly interconnected components: Proof of Liquidity Commitment (PoLC) mining, Liquidity Backing Auction (LiBA) and State Guardian Network (SGN).
The relationship among the three components is illustrated in Figure \ref{fig:economy}.

\begin{figure}[t]
\begin{center}
\includegraphics[width=5.5in]{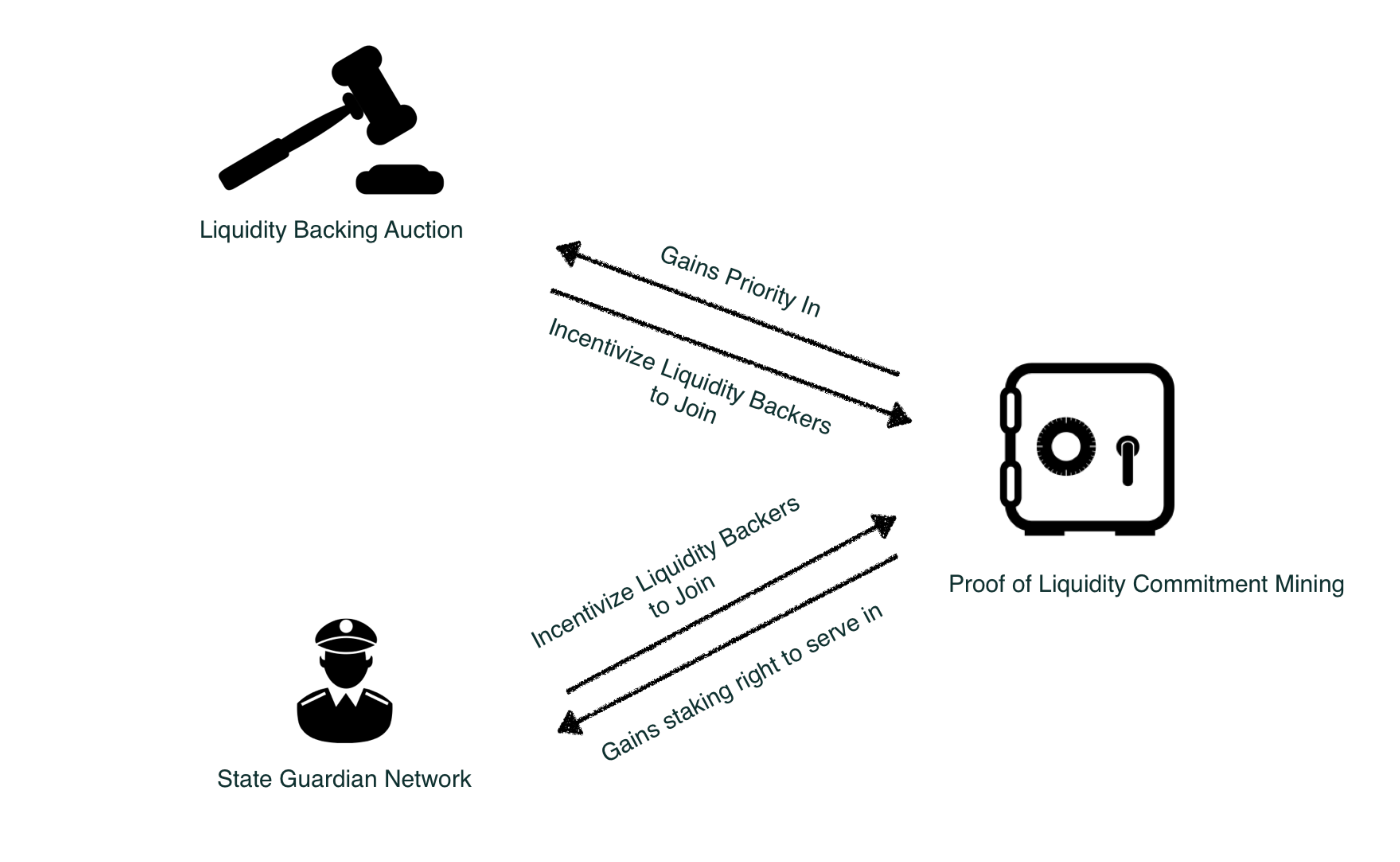}
\caption{Relationship among cEconomy components.}
\label{fig:economy}
\end{center}
\end{figure}

Before moving on to the details of these components, we first introduce several terms that will be used throughout this section. Specifically, a user in our cEconomy system may play any of the three roles: Off-chain Service Provider (OSP), End Users (EU), Network Liquidity Backer (NLB) and State Guardians (SG). Off-chain Service Providers (OSP) are entities who have the technical capability to run highly redundant, scalable and secure off-chain infrastructures. End Users (EU) can access the off-chain services provided by OSP (e.g., pay and receive cryptocurrency). They can be common consumers or they can be IoT devices, VPN providers, live video streaming providers and CDN providers, counterparties in Machine to Machine (M2M) systems, or even an off-chain/on-chain smart contract. Network Liquidity Backers (NLB) are entities that lock up their liquidity in the system to support the operations of off-chain infrastructure. State Guardians are those who provide EUs decentralized, secure, flexible and efficient state guarding service through State Guardian Network.

\subsubsection{Proof of Liquidity Commitment (PoLC) Mining}
Our first goal is to balance out the scalability-liquidity tradeoff by lowering the liquidity barrier for technically capable parties to become off-chain service providers and thus creating an efficient and competitive market for good and reliable off-chain services. The gist of the idea is to enable service providers to tap into large amounts of liquidity whenever they need to. The first part to realize this idea is to provision an abundant and stable liquidity pool that can smooth out short-term liquidity supply fluctuation. To that end, we propose the Proof of Liquidity Commitment (PoLC) virtual mining process.

From a high level, the PoLC mining process is to incentivize Network Liquidity Backers (NLB) to lock in their liquidity (which can be in the form of digital assets, including but not limited to cryptocurrencies and CELR) in \systemname for a long time by rewarding them with \tokenname tokens and therefore establishing a stable and abundant liquidity pool. 


More specifically, the mining process involves NLBs to commit (lock) their idle liquidity (for example, ETH) to a ``dumb box", called Collateral Commitment Contract (CCC), for a certain period of time. During this period of time when the digital assets are locked, the NLB's assets can only be used in the liquidity backing process and nothing else. More formally, the PoLC mining process can be defined as the following.

\begin{definition}[\textbf{PoLC Power}]
 If NLB $i$ locks $S_i$ amount of local cryptocurrency in a blockchain (e.g. ETH) for $T_i$ time, its PoLC power $M_i$ is computed as 
\begin{equation}
M_i = S_i \times T_i.
\end{equation}
\end{definition}

\begin{definition}[\textbf{PoLC Incentive Mechanism}]
 For a limited period of time, \systemname intends to provide incentives in the form of \tokenname to NLBs who lock their CCC as a show of support for the system. Incentives will be distributed proportional to each NLB's PoLC power. Let $R_i$ denote the incentives of $i$, one has:
\begin{equation}
R_i = \frac{R \times M_i}{\sum\limits_{j=1}^N M_j},
\end{equation}
where $R$ is the total reward for the current block.
\end{definition}

Note that locking liquidity in CCC does not carry any inherent counterparty risk as it simply shows a liquidity commitment to Celer Network. Also, note that early unlocking of CCC is not allowed. One may try to create a ``spoofed liquidation'' with an appearance of one's CCC getting liquefied due to ``hacking'' of a faked OSP. To prevent this spoofing, the newly mined \tokenname is not available for withdrawal and usage until CCC unlocks. Any early liquidation will cause the already mined CCC to be forfeited and redistributed to other miners. The construct of a common denominator of liquidity in PoLC is also an important question. For the initial launch of the platform, we will use the native currency of the target blockchain and later use more heterogeneous crypto assets through external price oracles.  


With these mechanisms in place, the PoLC mining process ensures that the PoLC power in the system will grow as the system and utility of the \tokenname grows, forming a positive loop.

At this point, one may wonder why is \tokenname valuable so that it can act as such an incentive? We explain that in the following sections describing Liquidity Backing Auction and State Guardian Networks.

\subsubsection{Liquidity Backing Auction (LiBA)}
The second part for solving the liquidity puzzle is to enable a way for off-chain service providers to access to liquidity pool globally, which is achieved via the Liquidity Backing Auction (LiBA). LiBA enables off-chain service providers to solicit liquidity through ``crowd lending". In essence, an off-chain service provider starts a LiBA on Celer Network to ``borrow" a certain amount of liquidity for a certain amount of time. An interested liquidity backer can submit a bid that contains the interest rate to be offered, amount of liquidity and the amount of \tokenname that she is willing to stake for the said period of time. 
The amount of liquidity can be submitted via a CCC. That is, CCC has the functionality to act as a liquidity backing asset. The borrowed liquidity will be used as a fraud-proof bond or outgoing channel deposit.

LiBA is a generalized multi-attribute Vickrey-Clarke-Groves (“sealed-bid second-score”) auction. To start an auction process, an OSP creates a standard LiBA contract through the \systemname's central LiBA registry with information regarding the total amount of requested liquidity ($q$), duration of the request ($d$) and the highest interest rate ($r_{max}$) that it can accept. NLBs who watch the registry will notice this new LiBA contract and can start the bidding process. \systemname requires all crypto assets to be locked in CCC for the bidding process. Note that CCC can be ``lock-free'' and simply used as a backing asset without the functionality of PoLC mining. CCC acts as a container for crypto assets and provides a unified verifiable value of heterogeneous crypto assets. Moreover, the use of CCC makes it easier for NLBs to participate in LiBA without moving crypto assets around every time they bid and thus simplifies the backing process and improve security. NLB $i$ submits the bid in the form of a tuple $b_i=(r_i, t_i, c_i)$, where $r_i$ is interest rate, $t_i$ is the total amount of \tokenname it is willing to lock up during the contract time and $c_i$ is the aggregate currency value contained in the set of CCCs bonded with this bid. Once the bid is submitted, the corresponding CCCs are temporarily frozen. After sealed bidding, the LiBA contract uses reverse second-score auction~\cite{auction} to determine winning bids with the following three steps. 
\begin{itemize} [leftmargin=*]
\item{(\textbf{Scoring Rule}). For each bid $b_i=(r_i, t_i, c_i)$ in the bid set $\mathcal{B}=\{b_1, b_2, ..., b_n\}$ with $f_i = \frac{t_i}{c_i}$, its score $s_i$ is calculated as the following:
\begin{equation}
s(b_i) = w_1\frac{f_i}{f_{max}} - w_2\frac{r_i}{r_{max}},
\end{equation}
where $f_{max} = \max\{f_1, f_2, ..., f_n\}$ and $r_{max} = \max\{r_1, r_2, ..., r_n\}$. $w_1$ and $w_2$ are weights for the two components and are initially to ensure we take into account an interest rate with higher weight and then take into account the amount of staked \tokenname. \footnote{These weights are subject to future decentralized governance adjustment.}.}
\item{(\textbf{Winner Determination}). To determine who has the opportunity to become the network liquidity backer, the LiBA contract sorts the bids in $\mathcal{B}$ in descending order by their scores. The sorted bid set is denoted by $\mathcal{B}^* = \{b^*_1, b^*_2, b^*_3, ..., b^*_n\}$, where $s(b^*_1) \ge  s(b^*_2)\ge\cdots \ge s(b^*_n)$ (ties are broken randomly). Winners are the first $K$ bids in $\mathcal{B}^*$, where $\sum\limits_{i=1}^K t_i \geq q$ and $\sum\limits_{i=1}^{K-1} t_i < q$.}
\item{(\textbf{Second-Score \tokenname Staking/Consumption}). After winners are determined, their CCCs will be locked in the LiBA contract for time $d$ (the duration of the request), their interest requests are accepted and interests are prepaid by the OSP initiating the liquidity request. However, it is important to note that not all of their committed \tokenname are locked/consumed in this contract. Each winner only needs to lock up/consume enough \tokenname so its score matches the score of the first loser in this auction. 
Whether the token will be locked or consumed depends on the stage of the platform. In the first five years, new tokens will be generated through PoLC mining and LiBA only requires token staking. When the PoLC mining concludes, LiBA  will start to consume token and the consumed tokens will be injected into the system as continuous PoLC mining rewards. 
Note that under the second-score \tokenname staking/consumption mechanism, the participants are projected to submit bids matching their true valuation (truthfulness \cite{vcg}) of the good (in this case, the opportunity to back the network liquidity). }
\end{itemize}

\begin{framed}
\textbf{Example:} Assume that an OSP initiated a LiBA with the following parameters (600 ETH, 30 days, 1\%) and there are three potential bidders (let's say A, B, and C) for this LiBA. The three bidders' bids are $b_A$ = (1\%, 800 \tokenname, 400 ETH); $b_B$ = (0.5\%, 800 \tokenname, 200 ETH); $b_C$ = (1\%, 100 \tokenname, 400 ETH). According to the scoring rule, we have $s_B > s_A > s_C$. Since A and B can fill the entire request, they are selected as winners. It should be noted that even though A and C have the same interest rate (1\%) and provide the same amount of liquidity (400 ETH), bidder A is selected as a winner while bidder C loses; this is due to the fact that their committed \tokenname tokens, as a symbol of their contributions to this platform, are significantly different.  Finally, according to the second-score staking rule, A and B lock (or consume) their \tokenname tokens to match the score of C for 30 days.    
\end{framed}

After the auction process finishes, the OSP who initiated the liquidity request pays the interests to the wining liquidity backers by depositing into the LiBA contract. Upon receiving the payment of interests, the LiBA contract then gives the interests to the corresponding liquidity backers and issues 1:1 backed cETHs (using ETH as an example) that match the liquidity request amount. Although cETH is essentially an IOU, it brings no risk to the user as these IOUs are $100\%$ insured by the network liquidity backers in the LiBA contract. 

In normal cases, the LiBA contract is resolved before the timeout when the OSP sends back all the cETH tokens. Basically, before the timeout, the OSP will settle all paid cETHs to EUs with real ETHs by withdrawing from upstream channels collectively. 

In the case where the OSP may get hacked, \systemname's trust model can vary. The simplest trust model without any protocol-level overhead is reputation-driven, where NLBs choose a reputable OSP without any history of default. In this simple model, NLBs are exposed to the risk of losing funds and assets as their CCCs are insurances for the EUs if the OSP defaults. However, it is arguable that even in this simple trust model, operating a highly reliable and reputable OSP is possible; it is very unlikely that all backings will be lost. There are additional security features which may be added around LiBA to further alleviate the potential risk. For example, newly issued cETHs are only allowed to be deposited to a whitelist of state channel contracts; cETHs are only allowed to be used incrementally with an upper bound spending speed. There are also a lot of things an OSP can do to maintain a secure infrastructure such as compartmentalized multi-node deployment, formal verification of security access rule of network infrastructure and more. 

In addition, we enable an enhanced security model where a randomly selected quorum of NLB will need to co-sign an OSP's operations (e.g. payment). These NLBs will only allow an outgoing transfer if and only if they see an incoming transaction with matching amount. These NLBs are also tethered to the incoming payments of OSP. If OSP fails to make the repayment eventually, NLBs will have the first-priority right to claim the incoming funds to OSP from other channels. However, we do note that this operation model will inevitably tradeoff some efficiency of the network. 

Having said these, we believe the ultimate balance in the trust model should be defined by the market demand. We open both trust model for the market to organically evolve. We envision that the trust-free model will be more favorable in the early days of network launch and then it will become more trust-based. 

Regardless of the LiBA's trust model, we want to highlight that the LiBA process ensures that \textbf{end users never take any security risk} as the required liquidity is 100\% ``insured" by the LiBA contract. In Celer Network's system, we strive to make sure that the benevolent end users do not need to worry about the security of their received fund and LiBA achieves that. PoLC and LiBA together incentivize an abundant liquidity pool, lower the barrier of becoming an off-chain service provider, reduce centralization risk, and accelerate network adoption.

\subsubsection{State Guardian Network}
Another usage of \tokenname is to provide off-chain data availability with novel insurance model and simple interactions, which balances out the scalability-availability tradeoffs as mentioned in Section \ref{sec:availability}.

From the surface, the availability problem seems to be an easy one to solve. One possible answer to that question might be: let's build some monitoring services in the future and people will pay for these monitoring services when themselves are not online. It feels like a reasonable solution at first look, but we drive this train of thought just a little bit forward, we will immediately see track-wrecking flaws.

Let's start with this question: are these monitoring services trust-based? If the answer is yes, then it creates another centralized choking point, single point of failure and is just not secure. Malicious counterparty can easily bribe these monitoring services to hurt benevolent end users.

Can we construct a monitoring service that is trust-free? For example, we may punish the monitoring service providers if they fail to defend the states for the users. However, when delving into this idea, we immediately see some caveats that render this approach impractical. How much penalty should monitoring service providers pay? Ignoring the frictions, the total penalty bond for monitoring service providers should be equal to the largest potential loss incurred to the party that went offline. 

This effectively doubles the liquidity requirement for an off-chain network because whenever someone goes offline, in addition to the existing locked liquidity in channels or fraud-proof bond in sidechains, monitoring service providers also need to lock up the same amount of liquidity as penalty deposits. 

Worse, the monitoring service providers need to retain different assets for different monitoring tasks and things can get really complicated when the involved states are complex and multiple assets classes are in play. Sometimes, there is not even a straightforward translation from state to the underlying value, given all the complex state dependency for generalized state channels. 

Even if there is enough liquidity, the ``insurance" model here is really rigid: it is basically saying that you get $X\%$ back at once if the monitoring service providers fail to defend your states. If you choose a large value of $X$, it can become really expensive due to the additional liquidity locking, but if you choose a small value of $X$, it can become really insecure. 

On top of these disadvantages, it is unclear how the price of state monitoring services should be determined as market information is still segregated with low efficiency. This low efficiency and the per-party bond on heterogeneous assets will further cause complicated on-chain and off-chain interactions with monitoring services and smash the usability of any off-chain platform. There are more issues, but above are already bad enough.

To solve these issues, we propose State Guardian Network (SGN). State Guardian Network is a special compact side chain to guard off-chain states when users are offline. \tokenname token holders can stake their \tokenname into SGN and become state guardians. Before a user goes offline, she can submit her state to SGN with a certain fee and ask the guardians to guard her state for a certain period of time. A number of guardians are then randomly selected to be responsible for this state based on state hash and the ``responsibility score". The detailed rules for selecting the guardians are as follows.
\begin{itemize} 
\item{(\textbf{State guarding request}). A state guarding request is a tuple $\eta_i=(s_i, \ell_i, d_i)$ where $s_i$ is the state that should be guarded, $\ell_i$ is the amount of service fee paid to guardians and $d_i$ is the duration for which this state should be guarded.}
\item{(\textbf{Responsibility Score}). The responsibility score of a state guarding request $\eta_i$ is calculated as:
\[
\gamma_i  = \frac{\ell_i}{d_i}.
\]
A user's Responsibility Score is essentially the income flow generated by this user to the SGN.
}
\item{(\textbf{Number of guardian stakes}). Given a set of outstanding state guarding request $\mathcal{R}=\{\eta_1,\cdots, \eta_m\}$, the number of \tokenname at stake for each request $\eta_i\in\mathcal{R}$ is
\[
n_i = \frac{\gamma_i}{\sum\limits_{j=1}^m \gamma_j} K,
\]
where $K$ the total number of \tokenname stakes that guardians stake in the SGN.
In other words, the amount of responsible \tokenname staked is proportional to the ratio between this request’s responsibility score to the sum of all outstanding states’ responsibility scores.
}

\item{(\textbf{Assignment of guardian stakes}). Given a state guarding request $\eta_i$, let $h_i$ be the hash value for the corresponding state $s_i$ (e.g., Keccak256 hash). Each \tokenname stake $k$ is associated with an ID $p_k$ (which is also a hash value). Let $\delta(g_1, g_2)$ be the distance between two hash values $g_1$ and $g_2$ (e.g., the distance measure used in Chord DHT \cite{chord}). Then \tokenname stakes are sorted in ascending order by their distance to the hash value $h_i$. Suppose that $\delta(p_1,h_i)\le \delta(p_2,h_i)\le \cdots \le \delta(p_K,h_i)$ (ties are broken randomly). The first $n_i$ \tokenname stakes that have the smallest distance are selected, and the corresponding stake owner will become the state guardian for this request.}
\end{itemize}

\noindent (\textbf{State Guarding Service Fee Distribution}). For each state guarding request $\eta_i=(s_i, \ell_i, d_i)$,  the attached service fee $\ell_i$ is distributed to state guardians according to the following rule. For each state guardian $j$, let $z_{j}$ be the amount of his/her staked \tokenname that were selected for this state guarding request. Then the service fee that guardian $j$ gets from state guarding request $\eta_i$ is
\[
q_{j} = \frac{z_{j}\times \ell_i}{n_i}.
\]
Note that each staked \tokenname has the same probability of being selected for a state guarding request. As a result, from the view of an SG, the more \tokenname staked in SGN, the more of such SG's stakes will be selected in expectation (i.e., the value of $z_{j}$ will be larger), thus the amount of  service fees that he will receive will increase. That affords \tokenname significant value as a membership to the SGN.

\vspace{2mm}

\noindent (\textbf{Security and Collusion Resistance}). Each guardian is assigned a dispute slot based on the settlement timeout. If the guardian fails to dispute its slot when it ought to, subsequent guardians can report the event and get the failed guardian's \tokenname stake. As a result, as long as at least one of the selected guardians are not corrupted and fulfills the job, an end user's state is always safe and available for dispute.

The SGN mechanism also brings in the following additional values.
\begin{itemize} 
\item \textbf{It does not require significant liquidity lock-up for guardians.} Guardians are only staking their \tokenname which can be used to guard arbitrary states regardless of the type/amount of the underlying value/tokens.

\item \textbf{It provides a unified interface for arbitrary state monitoring.} Regardless of whether the state is related to ETH, any ERC20 tokens or complicated states, the users would just attach a fee and send it to SGN. SGN does not care about the underlying states and involved value, and simply allocates the amount of \tokenname proportional to the fee paid to be responsible for the state.

\item \textbf{It enables simple interactions.} Users of Celer Network do not need to contact individual guardians and they only need to submit states to this sidechain. 

\item \textbf{Most importantly, it enables an entirely new and flexible state guarding economic dynamics.} Instead of forcing the rigid and opaque ``get $X\%$ back" model, SGN brings users a novel mechanism to ``get my money back in $X$ period of time" and an efficient pricing mechanism for that fluid insurance model. If all guardians at stake fail to dispute for a user, she will get the \tokenname stakes from these guardians as compensation. In steady state, \tokenname tokens that are staked in the SGN represent an incoming flow (e.g., earning $x$ Dai/second). Ignoring the cost of state monitoring and other frictions, when a user submits the state to SGN, she can choose explicitly how much \tokenname is ``covering" for her state by choosing fees paid per second (i.e., the responsibility score).

\end{itemize}

\subsubsection{Summary} 
Thinking systematically, cEconomy covers the full life-cycle of an off-chain platform. LiBA and PoLC mining are about bringing intermediary transactions off-chain in a low-barrier fashion. SGN is about securing the capability to bring most up-to-date states back on-chain when needed. As such, we believe cEconomy is the first comprehensive off-chain platform cryptoeconomics that brings new value and enables otherwise impossible dynamics.

\section{Conclusion}

Celer Network is a coherent technology and economic architecture that brings Internet-level scalability to existing and future blockchains. It is horizontally scalable, trust-free, decentralized and privacy-preserving. It encompasses a layered architecture with significant technical innovations on each layer. In addition, Celer Network proposes a principled off-chain cryptoeconomics design to balance tradeoffs made to achieve scalability. Celer Network is on a mission to fully unleash the power of blockchain and revolutionize how decentralized applications are built and used.

\section{Founding Team}


\bigskip\noindent\textbf{Dr. Mo Dong} received his Ph.D. from UIUC. His research focuses on learning based networking protocol design, distributed systems, formal verification and Game Theory. Dr. Dong led project revolutionizing Internet TCP and improved cross-continental data transfer speed by 10X to 100X with non-regret learning algorithms. His work was published in top conferences, won Internet2 Innovative Application Award and being adopted by major Internet content and service providers. Dr. Dong was a founding engineer and product manager at Veriflow, a startup specializes in network formal verification. The formal verification algorithms he developed is protecting networking security for fortune 50 companies. Dr. Dong is also experienced in applying Algorithmic Game Theory, especially auction theory, to computer system protocol designs. He has been teaching full-stack smart contract courses. He produces technical blogs and videos on blockchain with over 7000 subscribers.

\bigskip\noindent\textbf{Dr. Junda Liu} received his Ph.D. from UC Berkeley, advised by Prof. Scott Shenker. He was the first to propose and develop DAG based routing to achieve nanosecond network recovery (1000x improvement over state of art). Dr. Liu joined Google in 2011 to apply his pioneer research to Google’s global infrastructure. As the tech lead, he developed a dynamic datacenter topology capable of 1000 terabit/s bisection bandwidth and interconnecting more than 1 million nodes. In 2014, Dr. Liu became a founding member of Project Fi (Google’s innovative mobile service). He was the tech lead for seamless carrier switching, and oversaw Fi from a concept to a \$100M+/year business within 2 years. He was also the Android Tech Lead for carrier services, which run on more than 1.5B devices. Dr. Liu holds 6 US patents and published numerous papers in top conferences. He received BS and MS from Tsinghua University.

\bigskip\noindent\textbf{Dr. Xiaozhou Li} received his Ph.D. from Princeton University and is broadly interested in distributed systems, networking, storage, and data management research. He publishes at top venues including SOSP, NSDI, FAST, SIGMOD, EuroSys, CoNEXT, and won the NSDI'18 best paper award for building a distributed coordination service with multi-billion QPS throughput and ten microseconds latency. Xiaozhou specializes in developing scalable algorithms and protocols that achieve high performance at low cost, some of which have become core components of widely deployed systems such as Google TensorFlow machine learning platform and Intel DPDK packet processing framework. Xiaozhou worked at Barefoot Networks, a startup company designing the world’s fastest and most programmable networks, where he led several groundbreaking projects, drove technical engagement with key customers, and filed six U.S. patents.

\bigskip\noindent\textbf{Dr. Qingkai Liang} received his Ph.D. degree from MIT in the field of distributed systems, specializing in optimal network control algorithms in adversarial environments. He first-authored over 15 top-tier papers and invented 5 high-performance and highly-robust adversarial resistant routing algorithms that have been successfully applied in the industry such as in Raytheon BBN Technologies and Bell Labs. He was the recipient of Best Paper Nominee at IEEE MASCOTS 2017 and Best-in-Session Presentation Award at IEEE INFOCOM 2016 and 2018.

\newpage
\bibliographystyle{tfs}
\bibliography{citations}


\end{document}